\newtheorem{theorem}{Theorem}[section]
\newtheorem{proposition}[theorem]{Proposition}
\newtheorem{lemma}[theorem]{Lemma}
\newtheorem{remark}[theorem]{Remark}
\newtheorem{example}[theorem]{Example}
\newtheorem{corollary}[theorem]{Corollary}
\def\argmin{\mathop{\rm argmin}}
\def\argmax{\mathop{\rm argmax}}
\newcommand{\vect}[1]{\boldsymbol{#1}}
\newcommand{\tabL}{1.5mm}
\definecolor{Gray}{gray}{0.8}
\definecolor{Gray1}{gray}{0.9}
\newcolumntype{H}{>{\columncolor{Gray}}c}
\newcolumntype{I}{>{\columncolor{Gray1}}c}
\newcommand{\blue}[1]{\textcolor{blue}{#1}}
\title[]{Online radio interferometric imaging: assimilating and discarding visibilities on arrival}
\author[Cai, Pratley, and McEwen]{Xiaohao Cai$^{1}$\thanks{E-mail:~x.cai@ucl.ac.uk~(XC);~luke.pratley.15@ucl.ac.uk~(LP);\newline jason.mcewen@ucl.ac.uk (JDM)}, 
Luke Pratley$^{1}$\blue{\footnotemark[1]} and Jason D. McEwen$^{1}$\blue{\footnotemark[1]} \\
$^{1}$Mullard Space Science Laboratory,  University College London, Surrey RH5 6NT, United Kingdom  \\
} 
\begin{document}

\date{Accepted ---. Received ---; in original form ---}
\pagerange{\pageref{sec:intro}--\pageref{lastpage}}
\pubyear{2017}

\maketitle

\begin{abstract}
	The emerging generation of radio interferometric (RI) telescopes, such as the Square Kilometre Array (SKA), 
	will acquire massive volumes of data and transition radio astronomy to a big-data era.
	The ill-posed inverse problem of imaging the raw visibilities acquired by RI telescopes will become significantly 
	more computationally challenging, particularly in terms of data storage and computational cost. 
	Current RI imaging methods, such as CLEAN, its variants, and compressive sensing approaches ({\it i.e.} sparse regularisation),
	have yielded excellent reconstruction fidelity. However, scaling these methods to big-data remains difficult if not impossible 
	in some cases. All state-of-the-art methods in RI imaging lack the ability to process data streams as they are acquired during 
	the data observation stage. Such approaches are referred to as {\it online} processing methods. 
	We present an online sparse regularisation methodology for RI imaging. Image reconstruction is performed simultaneously with
	data acquisition, where observed visibilities are assimilated into the reconstructed image as they arrive and then discarded.  
	Since visibilities are processed online, good reconstructions are recovered much faster than standard (offline) methods
	which cannot start until the data acquisition stage completes. 
	Moreover, the online method provides additional computational savings and, most importantly, dramatically reduces 
	data storage requirements. 
	Theoretically, the reconstructed images are of the same fidelity as those recovered by the equivalent offline
	approach and, in practice, very similar reconstruction fidelity is achieved. 
	We anticipate online imaging techniques, as proposed here, will be critical in scaling RI imaging to the emerging 
	big-data era of radio astronomy.  
\end{abstract}

\begin{keywords}
	techniques: image processing -- techniques: interferometric -- methods: data analysis -- methods: numerical.
\end{keywords}

\section{Introduction}\label{sec:intro}
Since the 1930s, radio astronomy has transitioned from the first observations
to a data-rich era. Due to rapid technological developments, radio astronomy will transition from a data-rich era to the
so-called big-data era in coming years.
Radio interferometric (RI) telescopes allow us to explore the Universe by detecting radio waves emitted from a wide range of objects
in the sky. They observe the radio sky with high angular resolution and sensitivity, 
providing valuable information for astrophysics and cosmology \citep{RV46,ryl60,tho08}. 
Next-generation RI telescopes are being built to achieve science goals ranging from 
probing cosmic magnetic fields \citep{joh15} to the detection of the epoch of re-ionization \citep{koo15}, to name just a few.

Representative next-generation RI telescopes include:
the LOw Frequency ARray (LO-FAR\footnote{\url{http://www.lofar.org}}, \citealp{van13}),
the Extended Very Large Array (EVLA\footnote{\url{http://www.aoc.nrao.edu/evla}}),
the Australian Square Kilometre Array Pathfinder (ASKAP\footnote{\url{http://www.atnf.csiro.au/projects/askap}}, \citealp{hot14}), 
the Murchison Widefield Array (MWA\footnote{\url{http://www.mwatelescope.org/telescope}}, \citealp{tin13}), 
and the Square Kilometer Array (SKA\footnote{\url{http://www.skatelescope.org}}, \citealp{dew13}).
These new telescopes will acquire large volumes of data, and achieve significantly higher dynamic range and angular resolution 
than previous generations.
The SKA itself, for instance, will provide a considerable step in dynamic range -- 
six or seven orders of magnitude beyond prior telescopes -- and angular resolution, 
while producing massive volumes of data. For example, data rate estimates of SKA phase I are around five
terabits per second for both SKA1-low (a low frequency aperture array) and 
SKA1-mid (a mid frequency array of reflector dishes) \citep{BNB15}. Moreover, the data volume will be 
greater still in SKA phase II.

Briefly speaking, radio interferometers sample Fourier coefficients (visibilities) of the radio brightness distribution in the sky. 
Due to limited sampling in the Fourier plane, imaging an observation requires solving an ill-posed linear inverse 
problem \citep{tho08}, which is an important first step in many subsequent scientific analyses. 
The emerging era of big-data ushered in by the new generation of radio telescopes will, inevitably, 
bring further challenges in imaging and scientific analysis. The enormous data rates will create practical challenges in both storage 
and computation.  

Classical image reconstruction methods, such as CLEAN-based methods \citep{hog74,BC04,cor08,SFM11,off14,pra16} and the
maximum entropy method (MEM) \citep{A74,GD78,CE85}, have served the community well but do not exploit modern image reconstruction techniques.
They suffer from cumbersome parameter tuning and/or slow computation speed and require post processing due to their limited image models.
Furthermore, they struggle to confront the upcoming big-data era. Recently, compressive sensing (CS) techniques were ushered into RI imaging 
for image reconstruction
\citep{wia09a,wia09b,mce11,li11a,li11b,CMW12,car14,wol13,dab15,DWPMW17,DOAPMW17,gar15,OCRMTPW16,ODW17,PMdCOW16,KDTW17,KCTW17}.
CS-based methods exploit sparse regularisation techniques
and have shown promising results compared to traditional approaches such as CLEAN ({\it e.g.} \citealt{PMdCOW16,DOAPMW17,KDTW17}). 
Furthermore, several algorithms \citep{car14,OCRMTPW16,ODW17,KCTW17} 
have been developed to scale such approaches to big-data, as anticipated from the SKA, using, 
{\it e.g.}, distribution, parallelisation, dimensionality reduction, and/or stochastic strategies. 
In \cite{CPM17,CPM17B}, Bayesian inference techniques for sparsity-promoting priors were presented to quantify the 
uncertainties associated with reconstructed images, {\it e.g.} to estimate local credible intervals ({\it cf.} error bars) on
recovered pixels. In particular, in \cite{CPM17B}, {\it maximum-a-posteriori} (MAP) estimation
techniques were presented to scale uncertainty quantification to massive data sizes, {\it i.e.} to big-data. 

All of these reconstruction methods ({\it e.g.}, CLEAN, MEM and CS-based methods), however, store the 
entire set of observed visibilities for subsequent processing once data acquisition is completed 
({\it i.e.}, after the full observation is made).
In other words, they can be categorised as {\it offline} methods. In this article we develop 
{\it online} methods for RI imaging. {\it Online} methods process data piece-by-piece as they are acquired,
without having the entire data-set available from the start \citep{S11,H15}.
Processing is thus performed while the data are acquired.
In particular, online methods generally outperform offline methods in terms of computational efficiency,
and memory and storage costs.
In RI imaging, since the visibility acquisition process can take a reasonably long time (often $\sim$10 hours or longer)
and the observed visibilities require huge storage, particularly in the big-data era, online methods
can provide considerable advantages.

In this article we propose a special type of online imaging method for RI. 
The proposed online method is based on iterative convex optimisation algorithms \citep[\textit{e.g.}][]{CP10}, which 
have been applied to RI imaging problems already \citep[\textit{e.g.}][]{CPM17B}.
Compared with the standard (offline) algorithms 
which use visibilities from an entire observation at every iteration, our online method 
needs to deal only with one single visibility block acquired at the latest time slot, at each iteration.
Most importantly, a visibility block will be discarded once the online method assimilates it.
The storage space will be released at the same time and will then be used for next data block. 
Consequently, the storage requirements of our online algorithm are limited to the size of the visibility block,
rather than the size of the full set of visibilities acquired over an observation. Storage requirements of our online algorithm
are thus a very small fraction of the storage requirements of offline methods.
Once arriving at the last data block (the entire visibilities then have been observed) 
and after processing it, the image from the observation will have been reconstructed by our online method. 
At the moment the final visibilities are acquired, our online method is close to completing its reconstruction work, 
whereas standard methods are only able to start the reconstruction.
Moreover, we verify the convergence properties of our online method and show that the quality of the images 
recovered by our online method is essentially the same as the equivalent offline method. 
Although our proposed online method is customised for the application of RI imaging, 
the concept of the method itself is generic and therefore can be directly applied to 
many other applications, such as medical imaging.  
Furthermore, our online framework supports the uncertainty quantification methodology proposed in \cite{CPM17B}.
Both techniques can thus be combined to target image reconstruction and uncertainty analysis for the upcoming big-data era. 

The remainder of this article is organised as follows. In Section \ref{sec:ri} we review the RI imaging inverse problem,
MAP (maximum a posteriori) estimation, related state-of-the-art optimisation algorithms and some classical online
methods. Our general online optimisation algorithm is proposed in Section \ref{sec:alg-ol}, with a discussion of its convergence 
properties. Section \ref{sec:alg-ri} focuses on sparse image reconstruction for RI imaging using the proposed online method, including an analysis 
of visibility storage requirements and computational cost. Numerical results evaluating the performance of our
proposed method and the comparison with related methods are reported in Section \ref{sec:exp}. 
Finally, in Section \ref{sec:con}, we conclude with a brief description of our main contributions and 
future works.

\section{Radio interferometric imaging and related methods}\label{sec:ri}
In this section the inverse reconstruction problem of RI imaging is first reviewed. 
Then, we review MAP estimation techniques to address the RI imaging problem efficiently.
Finally, some representative online optimisation methods are reviewed. 

\subsection{Radio interferometry}
In the following, we briefly recall the background of the inverse problem of RI imaging 
(for further details see, {\it e.g.}, \citealt{CPM17} and references therein).
A radio interferometer consists of an array of radio antennae, where each pair of antennae forms a baseline.
When the field of view is narrow and the baselines are co-planar, the telescope measures visibilities, $\vect y$, 
by correlating the signals from pairs of antennas in a given baseline, 
with baseline vector $\vect u = (u, v)$ being defined as the separation of the antennae.

Let $\vect x$ represent the sky brightness distribution, described in coordinates $\vect l = (l, m)$, and $A(\vect l)$ represent the 
primary beam of the telescope. The RI measurement equation for obtaining $\vect y$ can be represented as  \citep{tho08}
\begin{equation} \label{eqn:ri}
	\vect y (\vect u) = \int A(\vect l) \vect x(\vect l) {\rm e}^{-2\pi i \vect u \cdot \vect l} {\rm d}^2 \vect l.
\end{equation}
Recovering the sky intensity signal $\vect x$ from the measured visibilities $\vect y$, acquired according to 
equation \eqref{eqn:ri}, forms a linear inverse problem \citep{RBVC09}.

In the discrete setting, let $\vect x \in \mathbb{R}^N$ represent the sampled intensity signal
(the sky brightness distribution) which, under a basis or dictionary ({\it e.g.}, a wavelet basis or an over-complete frame)
$\bm{\mathsf{\bm{\mathsf{\Psi}}}} \in \mathbb{C}^{N\times L}$, can be represented as
\begin{equation}\label{eqn:x}
	{\vect x} = \bm{\mathsf{\Psi}} {\vect a} = \sum_{i} \bm{\mathsf{\Psi}}_i a_i,
\end{equation}
where vector ${\vect a} = (a_1, \cdots, a_L)^\top$ represents the synthesis coefficients
of ${\vect x}$ under $\bm{\mathsf{\Psi}}$. In particular, ${\vect x}$ is said to be sparse if ${\vect a}$ contains only $K$ non-zero coefficients,
with $K\ll N$.  Similarly, $\vect x$ is called compressible if many coefficients of $\vect a$ are nearly zero, {\it i.e.}, 
its sorted coefficients $a_i$ satisfy a power law decay.
In practice, it is ubiquitous that natural images $\vect x$ are sparse or compressible.
Let $\vect y \in \mathbb{C}^M$ be the $M$ visibilities observed under a linear measurement operator
$\bm{\mathsf{\Phi}} \in \mathbb{C}^{M\times N}$ modelling the realistic acquisition of the sky brightness components. Then, we have
\begin{equation}\label{eqn:y}
	{\vect y}=\bm{\mathsf{\Phi}} {\vect x} + {\vect n},
\end{equation}
where ${\vect n} \in \mathbb{C}^{M}$ represents additive noise. Without loss of generality, we subsequently consider 
independent and identically distributed (i.i.d.) Gaussian noise. 
In practice, $\vect y$ is not sufficiently sampled, which results in an ill-posed inverse problem 
in the perspective of image reconstruction of $\vect x$ from \eqref{eqn:y}. 

For the upcoming big-data era, the number
of the data points in $\vect y$ could be much larger than the size of image $\vect x$, {\it i.e.}, $M \gg N$, providing additional challenges 
in processing speed, memory load and data storage requirements. These new challenges call for new efficient solvers, 
in addition to the distributed, parallel and stochastic computation approaches ({\it e.g.} \citealt{car14,OCRMTPW16}).

\subsection{Maximum a posteriori (MAP) estimation}
One very effective way to address the ill-posed inverse problem in \eqref{eqn:y} is by using
Bayes' theorem to infer the posterior distribution of the image $\vect x$ given data $\vect y$, by
\begin{equation}\label{eqn:baye}
	p(\vect x | \vect y) = \frac{p(\vect y | \vect x) p(\vect x)}{\int_{\mathbb{R}^N}p(\vect y | \vect s) p(\vect s) {\rm d} \vect s},
\end{equation}  
where $p(\vect y | \vect x)$ is the likelihood and $p(\vect x)$ is the prior.
The normalising constant in the denominator of \eqref{eqn:baye} is the marginal likelihood or Bayesian evidence, 
which need not be computed for parameter inference. The MAP estimator, a Bayesian point estimator, can be
obtained as a solution of problem \eqref{eqn:y} by 
\begin{equation}
	{\vect x}_{\rm map} = \argmax_{\vect x \in \mathbb{R}^N} p(\vect x | \vect y).
\end{equation}
In the following we derive a common objective functional used to produce ${\vect x}_{\rm map}$.
For the RI imaging problem of \eqref{eqn:y}, in the case of i.i.d. Gaussian noise, the likelihood function reads
\begin{equation}
	p(\vect y|\vect x) \propto {\rm exp}(-\|\vect y - {\cal A} \vect x \|_2^2 /2\sigma^2), 
\end{equation}
and let the prior distributions of $\vect x$ be
\begin{equation}
	p(\vect x) \propto {\rm exp}(-{\phi}({\cal B} \vect x)),
\end{equation}
where $\sigma$ represents the standard deviation of the noise, ${\cal A}$ and ${\cal B}$ are problem-related linear operators,
and $\phi$ encodes prior information of the image (acting as a regularising penalty). 
Refer to \cite{CPM17} for more detailed discussions regarding choices of the prior function $\phi$.
Let 
\begin{equation}\label{eqn:con-fun}
	{\cal F}_{\vect y} (\vect x) := \phi({\cal B} \vect x) + \|\vect y - {\cal A} \vect x \|_2^2 /2\sigma^2.
\end{equation}
Consider $\phi$ convex, from which it follows that ${\cal F}_{\vect y}$ is convex.
Then the inverse problem in \eqref{eqn:y} can be solved by the MAP estimator given by
\begin{equation}\label{eqn:x-map}
	{\vect x}_{\rm map} = \argmin_{\vect x \in \mathbb{R}^N} \left\{ \phi({\cal B} \vect x) + \|\vect y - {\cal A} \vect x \|_2^2 /2\sigma^2 \right\}.
\end{equation} 
Refer to \cite{CPM17} for further details about Bayesian inference in the context of RI imaging.

Let ${\cal B} = \bm{\mathsf{\Psi}}^\dagger$ (the adjoint of $\bm{\mathsf{\Psi}}$) and ${\cal A} = \bm{\mathsf{\Phi}}$
for the analysis setting and ${\cal B} = \bm{\mathsf{I}}$ (identity operator) and ${\cal A} = \bm{\mathsf{\Phi}}\bm{\mathsf{\Psi}}$ 
for the synthesis setting. After equipping $\phi$ with the $\ell_1$ norm, $\|\cdot\|_1$, to promote sparseness 
\citep{wia09a,wia09b,mce11,OCRMTPW16,PMdCOW16,CPM17B,CPM17},
the MAP estimation model in \eqref{eqn:x-map} reads
\begin{equation}\label{eqn:ir-un-af}
	{\vect x}_{\rm map} = \argmin_{\vect x} \Big\{\mu \|\bm{\mathsf{\Psi}}^\dagger {\vect x}\|_1 
	+ \|{\vect y}-\bm{\mathsf{\Phi}} {\vect x}\|_2^2/2\sigma^2 \Big\},
\end{equation}
or 
\begin{equation}\label{eqn:ir-un-sf}
	{\vect x}_{\rm map} = \bm{\mathsf{\Psi}}\times\argmin_{{\vect a}} \Big\{ \mu \|{\vect a}\|_1 
	+ \|{\vect y}-\bm{\mathsf{\Phi}}\bm{\mathsf{\Psi}} {\vect a}\|_2^2/2\sigma^2 \Big\},
\end{equation}
where $\mu$ is the regularisation parameter used to balance the tradeoff between sparsity and data fidelity. 
Models \eqref{eqn:ir-un-af} and \eqref{eqn:ir-un-sf} are generally coined as {\it analysis} and
{\it synthesis} unconstrained frameworks, respectively. Further discussions about the analysis and synthesis forms can be found in,
{\it e.g.}, \cite{MHL04,EMR07,CJP12,CPM17,CPM17B}. 

\subsection{Convex optimisation methods for MAP estimation} \label{sec:ri-co}
MAP estimation models like the analysis model \eqref{eqn:ir-un-af} and the synthesis model \eqref{eqn:ir-un-sf}
can be solved by convex optimisation methods. 

Consider a general problem represented as
\begin{equation} \label{eqn:fb-p}
	\argmin_{\vect x \in \mathbb{R}^{N}} \{ f(\vect x) + g_{\vect y}(\vect x)\},
\end{equation}
where $f: \mathbb{C}^{N} \rightarrow \mathbb{R}$ is proper, convex and lower semi-continuous, 
and $g_{\vect y}: \mathbb{C}^{N} \rightarrow \mathbb{R}$,
which is usually abbreviated as $g$ afterwards (when it is associated with all $\vect y$),
is convex, and differentiable with Lipschitz constant $\beta_{\rm Lip} \in (0, \infty)$, {\it i.e.},
\begin{equation}
	\|\nabla g(\hat{\vect z}) - \nabla g(\bar{\vect z})\| \le \beta_{\rm Lip} \|\hat{\vect z} - \bar{\vect z} \|, 
	\ \  \forall (\hat{\vect z}, \bar{\vect z}) \in \mathbb{C}^{N} \times \mathbb{C}^{N}.
\end{equation}
Define, $\forall \lambda \in \mathbb{R}^+$, the {\it proximity operator} of convex function $f$ at $\vect x \in \mathbb{R}^N$ as \citep{M65}
\begin{equation} \label{eqn:prox-ope}
	{\rm prox}_f^{\lambda} ({\vect x})  \equiv \argmin_{{\vect u}\in \mathbb{R}^N} \left \{ f({\vect u}) + \|{\vect u} - {\vect x}\|_2^2/2\lambda \right \},
\end{equation}
and represent ${\rm prox}_f^{1} ({\vect z})$ by ${\rm prox}_f({\vect z})$ for simplification.

The minimisation problem with form \eqref{eqn:fb-p} can be solved by many convex optimisation methods,
{\it e.g.}, the forward-backward splitting algorithm, the Douglas-Rachford splitting algorithm, the alternating direction method of multipliers (ADMM), 
or the simultaneous direction method of multipliers (SDMM) (see \citealt{CP10} and references therein).
In the following, we briefly recall the forward-backward algorithm, due to its simplicity, efficiency, and pertinence to the objective functionals
considered in this article, {\it i.e.},  the analysis model \eqref{eqn:ir-un-af} and synthesis model \eqref{eqn:ir-un-sf}.

In general, from the fixed point equation
\begin{equation} 
	{\vect x} = {\rm prox}_{\lambda f} \left({\vect x} - \lambda \nabla g({\vect x}) \right),  
\end{equation}
the iteration formula of the forward-backward algorithm can be written as \citep{CP10}
\begin{equation} \label{eqn:fb-i}
	{\vect x}^{(i+1)} = {\rm prox}_{ \lambda^{(i)} f} \left ({\vect x}^{(i)} -  \lambda^{(i)} \nabla g({\vect x}^{(i)}) \right),  
\end{equation}
where $ \lambda^{(i)}$ is the step size in a suitable bounded interval.
After sufficient iterations, sequence $\{{\vect x}^{(i)}\}$ generated by \eqref{eqn:fb-i}
 converges to a minimiser of problem \eqref{eqn:fb-p}.
There are several variants of \eqref{eqn:fb-i} that achieve more efficient performance;
refer to \cite{CW05}, \cite{BC11} and \cite{CPM17B} for more detailed discussions. 
 
\subsection{Online optimisation methods} 
Nowadays, various online problems have emerged in different research areas and application domains, such as
computer science and machine learning ({\it e.g.} online regression, pattern recognition); 
for more details refer to \cite{S11} and \cite{H15}, and references therein.
Even through these problems are different to our focus in this article, their intrinsic 
ideas, {\it i.e.}, their online manner, can help to design appropriate methods in new applications such as
RI imaging. We therefore, in the following, briefly recall some classic online optimisation methods. 

The general online optimisation protocol can be described as follows. 
In general, consider a set of possible actions ${\vect C}$, a set of incoming observations ${\vect D}$,
and a given loss function $l: {\vect C} \times {\vect D} \rightarrow \mathbb{R}^+$.
Here set ${\vect D}$ has a general meaning, not restricted to RI observations.
At each time slot $i = 1, 2, \ldots, n$: (i) choose an action ${\vect z}^{(i)} \in {\vect C}$ and simultaneously 
an observation ${\vect z}_*^{(i)} \in {\vect D}$; (ii) obtain the loss $l({\vect z}^{(i)}, {\vect z}_*^{(i)})$. 
Then, the objective of the online optimisation protocol is to select actions ${\vect z}^{(i)}$ to minimise the total loss
\begin{equation} \label{eqn:oom}
	\sum_i l({\vect z}^{(i)}, {\vect z}_*^{(i)}).
\end{equation}
The above problem \eqref{eqn:oom} can be solved by many methods, such as the online mirror descent,
the online Newton step algorithms, or online gradient methods (see \citealt{S11,H15} for more detail).
As an example, starting at ${\vect z}^{(1)} \in {\vect C}$, the online gradient method, for $i\ge1$, updates iteratively by
\begin{align}
	{\tilde{\vect z}}^{(i+1)} & = {\vect z}^{(i)} - \lambda \nabla l({\vect z}^{(i)}, {\vect z}_*^{(i)}), \label{eqn:ogm-1} \\
	{\vect z}^{(i+1)}         & = \argmin_{\vect z \in {\vect C}} \| {\tilde{\vect z}}^{(i+1)} - \vect z \|_2.               
\end{align} 

Note that in the iteration formula \eqref{eqn:fb-i} all observations acquired are needed at every iteration. 
On the contrary, the online iteration formula \eqref{eqn:ogm-1} shows that, at iteration $i$, only action ${\vect z}^{(i)}$
(observed at last iteration) and the latest incoming observation ${\vect z}_*^{(i)}$ (not all observations)
are used to derive a new action. For the online setting, all observations are only known at the final stage.
Seeing the visibility acquisition property in RI imaging in this manner, {\it i.e.}, the visibilities are observed one-by-one 
(or bunch-by-bunch), the online approach mentioned above can be adapted to tackle the RI imaging problem.
In the next section we propose our generic online method for minimisation problems such as \eqref{eqn:fb-p},
based on convex optimisation methods applicable for MAP estimation.

\section{Proposed online imaging method}\label{sec:alg-ol}
In practice, in RI the time of acquiring the measurements $\vect y$ can be long (often $\sim$10 hours or longer),
and the space needed to store the data can be extremely large, particularly in the big-data era.
Waiting to obtain and store all measurements is not efficient and imposes large costs that may be avoided. Furthermore,
popular RI imaging methods ({\it e.g.} CLEAN and CS-based methods) can require relatively long computation times
to recover images. There is an urgent need for online processing of the incoming data to recover images, which
is the main focus of this article. 
In this section we present our general online reconstruction method for
solving inverse imaging problems, {\it i.e.}, the analysis model \eqref{eqn:ir-un-af} and synthesis model \eqref{eqn:ir-un-sf},
with its convergence analysis. 
This new technique can significantly improve data processing efficiency in both computation time and data storage, 
and is essential to cope with the challenges of next-generation radio telescopes in the big-data era. 

The diagram in Figure \ref{fig:online_diag} shows the methodology of our proposed online method. 
As is shown, firstly, the algorithm checks whether the data observation stage has completed. If yes, no new data block will be 
observed and thus the online method stops. Otherwise, the algorithm: loads the new observed data block;
assimilates it; releases the data block; 
updates the intermediate reconstructed image (using the newly assimilated data); 
and then sets the current reconstructed image as the starting point for the next iteration. 
The above steps are repeated until the data observation stage completes and then the final reconstructed image is set as the output --
the reconstruction result of the online method.

\begin{figure}
  \begin{center}
    \begin{tabular}{c}
  	 \includegraphics[trim={{.55\linewidth} {1.03\linewidth} {.98\linewidth} {.87\linewidth}}, clip, width=0.8\linewidth, height = 0.9\linewidth]
		{./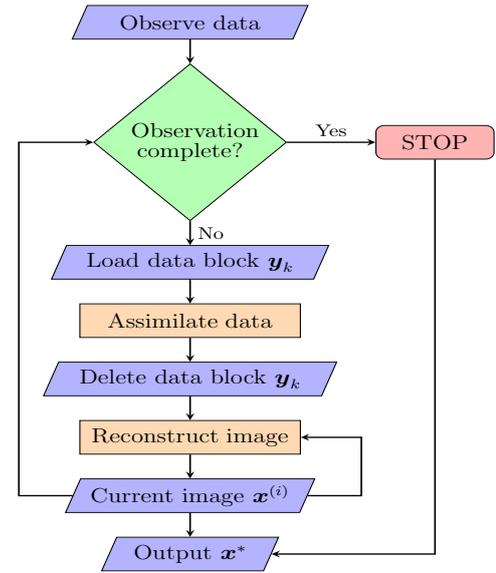} 
    \end{tabular}
  \end{center}
  \caption{Our proposed online imaging method ({\it e.g.} for RI imaging). 
  Firstly, the algorithm checks whether the data observation stage has completed. If yes, no new data block will be 
  observed and thus the online method stops. Otherwise, the algorithm: loads the new observed data block;
  assimilates it; releases the data block; 
  updates the intermediate reconstructed image (using the newly assimilated data); 
  and then sets the current reconstructed image as the starting point for the next iteration. 
  The above steps are repeated until the data observation stage completes and then the final reconstructed image is set as the output --
  the reconstruction result of the online method.   }
  \label{fig:online_diag}
\end{figure}

\subsection{Data blocks}
Without loss of generality, we split the measurements ${\vect y} \in \mathbb{C}^M$ into $B$ blocks and assume these blocks are received 
at different but consecutive time slots, {\it i.e.},
\begin{align} \label{eqn:split-y}
	{\vect y} & = \begin{bmatrix} 
	{\vect y}_1^{\top}, \cdots, {\vect y}_{k}^{\top}, \cdots,
	{\vect y}_{B}^{\top}
	\end{bmatrix}^{\top}, 
	\ \ {\vect y}_{k} \in \mathbb{C}^{M_k},
\end{align}
where block ${\vect y}_{k}$ is received earlier than ${\vect y}_{k+1}$ and $\sum_{k=1}^B M_k = M$.
Accordingly, we split the measurement operator $\bm{\mathsf{\Phi}}$ into $B$ blocks, {\it i.e.},
\begin{align}
	{ \bm{\mathsf{\Phi}}} & = \begin{bmatrix} 
	{ \bm{\mathsf{\Phi}}}_1^{\top},\cdots, { \bm{\mathsf{\Phi}}}_{k}^{\top}, \cdots, { \bm{\mathsf{\Phi}}}_{B}^{\top}
	\end{bmatrix}^{\top},
	\ \ { \bm{\mathsf{\Phi}}}_{k} \in \mathbb{C}^{M_k\times N}.
\end{align}
Note that, for each data block $k\in \{1, \cdots, B\}$, problem \eqref{eqn:y} can be rewritten as
\begin{equation}\label{eqn:y-p}
	{\vect y}_{k}=\bm{\mathsf{\Phi}}_{k} {\vect x} + {\vect n}_{k},
\end{equation}
where ${\vect n}_{k}$ represents additive noise associated with measurement block ${\vect y}_{k}$.

\subsection{Online algorithm}
For the $B$ blocks of ${\vect y}$ in \eqref{eqn:split-y}, which are assumed to be obtained at different times,
we suggest using the measurement blocks as they are received for early-stage reconstruction, 
rather than waiting and starting reconstruction once all measurement blocks are received. 
Then, once receiving the last measurement block, a complete reconstruction can be achieved immediately,
or at least much faster than an offline approach, since a close estimate of the underlying image is already available 
from early-stage reconstructions.
In the following, we present our online reconstruction strategy based on the standard 
forward-backward algorithm reviewed in the previous section.  

We assume $g$ (the second term in \eqref{eqn:fb-p}) is separable according to
the $B$ blocks of $\vect y$ in \eqref{eqn:split-y}, {\it i.e.}, 
\begin{equation} \label{eqn:split-f2}
	g = g_{{\vect y}_1} + \cdots + g_{{\vect y}_{k}} + \cdots + g_{{\vect y}_{B}},
\end{equation} 
where $g_{{\vect y}_{k}}$ represents function $g$ corresponding to 
data block $k$ in $\vect y$. For simplicity, we use $g_{k}$ to denote $g_{{\vect y}_{k}}$ in the following.
Note the fact that the data fidelity term in the analysis form \eqref{eqn:ir-un-af} or synthesis form \eqref{eqn:ir-un-sf} 
is indeed separable.
Then, using \eqref{eqn:split-f2}, the general minimisation problem \eqref{eqn:fb-p} can be rewritten as 
\begin{equation} \label{eqn:fb-p-ol}
	\argmin_{\vect x \in \mathbb{R}^{N}} {\cal F}_{\vect y} (\vect x) := \left \{ f(\vect x) + \sum_{k=1}^B g_{k}(\vect x) \right\}.
\end{equation} 

In the online setting, the $B$ blocks of $\vect y$ are being received one by one at different time slots, so the objective functional
\eqref{eqn:fb-p-ol} can only be exactly formed when all the $B$ blocks are acquired.
For the first $b$ blocks available at some time slot where $b \le B$, problem \eqref{eqn:fb-p-ol} can be written as 
\begin{equation} \label{eqn:fb-p-ol-part}
	\argmin_{\vect x \in \mathbb{R}^{N}} {\cal F}_{{\vect y}_1^b} (\vect x) :=  \left \{ f(\vect x) + \sum_{k=1}^{b} g_{k}(\vect x) \right \}.
\end{equation} 
Clearly, all the methods and techniques mentioned in Section \ref{sec:ri-co} that are applicable to solve problem \eqref{eqn:fb-p} 
can also be applied to solve problem \eqref{eqn:fb-p-ol-part}. 
For example, applying the forward-backward iterative formula \eqref{eqn:fb-i} to problem \eqref{eqn:fb-p-ol-part}, we obtain
\begin{equation} \label{eqn:fb-ol-i}
	{\vect x}^{(i+1)} = {\rm prox}_{ \lambda^{(i)} f} \left({\vect x}^{(i)} -  \lambda^{(i)} \nabla g^{b}_1 ({\vect x}^{(i)}) \right),
\end{equation}
where $g^{b}_{1} = g_{1} + \cdots + g_{b}$. 

With a given starting point, the iterative strategy of the proposed online algorithm is implemented as follows.  
Starting with $b = 1$ (corresponding to the first data block), execute formula \eqref{eqn:fb-ol-i} associated with objective functional
\eqref{eqn:fb-p-ol-part} for a few iterations (a single iteration can be applied in practice), 
and denote the current result by ${\vect x}^{(i)}$. Then, move to $b = 2$ (corresponding to the second data block)
at the appropriate time by executing formula \eqref{eqn:fb-ol-i} using ${\vect x}^{(i)}$ as the starting point. 
Continue this procedure until $b = B$ (corresponding to the final data block), and denote the final result by ${\vect x}^*$ --
the reconstruction result of the online method. 
In order to solve problem \eqref{eqn:fb-p-ol}, the online algorithm is tackling a series of subproblems \eqref{eqn:fb-p-ol-part} 
according to different $b$ ($1\le b \le B$), where the reconstruction result corresponding to the previous subproblem 
is used as a starting point to process the next one, until the final subproblem amounts to the full problem. 

We summarise our online algorithm in Algorithm \ref{alg:ofb}. 
While we present an overview of the general form of the algorithm, it is possible to separate the data assimilation and imaging 
stages and discard the data block as soon as the data are assimilated (as discussed in Section~\ref{sec:alg-ri}).
The stopping criterions used in Algorithm \ref{alg:ofb} are specified as type I and type II, and are defined as follows: 
(i) type I can be set using the maximum number of data blocks 
(if known in advance) or a feedback of whether no new data blocks are available;
(ii) type II can be set using a chosen maximum iteration number and, in practice, 
we set this value to 1, which requires the least computational cost.
Apart from the maximum iteration number, the relative error of the solutions at two consecutive iterations can also be adopted as 
a stopping criterion for type II. 

\begin{algorithm} 
	\caption{Online forward-backward algorithm} \label{alg:ofb}
	\textbf{Input:} ${\vect x}^{(0)} \in \mathbb{R}^N$, $\sigma$ and $\lambda^{(b)} \in (0, \infty)$\\
	\textbf{Output:} ${\vect x}^*$ \vspace{0.05in} \\ 
	$i = 0, b = 0$ \\
	\Do{Stopping criterion type I is not reached ({\it e.g.} maximum number of data blocks)}{
		$b=b+1$\\
		{\bf load} data ${\vect y}_{b}$  // Load data block \\ 
		\Do{Stopping criterion type II  is not reached [{\it e.g.} maximum iteration number (typically once) or relative error of the solutions]}{
			// Assimilate ${\vect y}_{b}$ and image \\
			${\vect x}^{(i+1)} = {\rm prox}_{ \lambda^{(b)} f} \Big({\vect x}^{(i)} - \lambda^{(b)} \nabla g^{s}_{b} ({\vect x}^{(i)}) \Big)$     \\
			$i = i + 1$
		}
		{\bf delete} ${\vect y}_{b}$   // Discard data block 
		}   \vspace{0.05in}
																								
	set ${\vect x}^* = {\vect x}^{(i)}$
\end{algorithm}

\begin{remark}
	The forward-backward algorithm presented in Algorithm \ref{alg:ofb} is just a specific example of our proposed 
	online methodology. The online iterative form, however, can be very general. Any 
	iterative methods which are applicable for minimising problem \eqref{eqn:fb-p-ol} are likely to be compatible 
	with the online strategy proposed here. In other words, these methods can be simply extended to 
	online versions in the same manner.
\end{remark}

\subsection{Convergence analysis}
Since the convergence properties of standard splitting algorithms have been verified ({\it e.g.} see \citealt{CP10} and references therein),
the convergence of our online algorithm is therefore self-evident if the stopping criterion type II in Algorithm \ref{alg:ofb} is 
proper, \textit{i.e.}, the maximum iteration number is assigned large enough. 
Nevertheless, it is still important to consider the converge performance when executing as few iterations as possible,
{\it e.g.} a single iteration in stopping criterion type II.

For simplicity, represent the iterative form \eqref{eqn:fb-ol-i} as
\begin{equation} 
	{\vect x}^{(i+1)} = {\cal K} ({\vect x}^{(i)}).
\end{equation}
In particular, symbol ${\cal K} $ here is a general operator which represents any type of iterative formula applicable.
Let ${\vect x}^*$ be a minimiser of problem \eqref{eqn:fb-p-ol}. We want to show that 
${\vect x}^{(i)}$ goes to ${\vect x}^*$ and ${\cal F}_{{\vect y}_1^b} ({\vect x}^{(i)})$ goes to ${\cal F}_{\vect y} ({\vect x}^*)$. 
Obviously, when $i\rightarrow \infty$ and $b \rightarrow B$, we have
\begin{equation} 
	{\vect x}^{(i)} \rightarrow {\vect x}^*, \quad
	{\cal F}_{{\vect y}_1^b} ({\vect x}^{(i)}) \rightarrow {\cal F}_{\vect y} ({\vect x}^*),
\end{equation}
due to the standard convergence results of splitting methods. 

For the online algorithm, the computed ${\vect x}^{(i+1)}$ are associated with just part of data blocks
before all of them are acquired. 
As ${\vect x}^{(i+1)}$ are finally expected to minimise problem \eqref{eqn:fb-p-ol}, there is an important relationship 
between ${\vect x}^{(i+1)}$ and the objective functional ${\cal F}_{\vect y}(\cdot)$, which considers all visibility blocks. 
To address this point, the following Theorem \ref{thm:cvg} concludes that sequence ${\cal F}_{\vect y}({\vect x}^{(i)})$ is
monotone decreasing to ${\cal F}_{\vect y}({\vect x}^*)$, under a mild assumption: 
let ${\vect x}^{(i+1)}$ be obtained with $b$ data blocks, $b \le B$, then assume
\begin{equation} \label{eqn:asm}
	\sum_{k=b+1}^{B} g_{k} ({\vect x}^{(i)}) \ge \sum_{k=b+1}^{B} g_{k} ({\vect x}^{(i+1)}).
\end{equation}
In words, inequality \eqref{eqn:asm} means that the intermediate reconstruction corresponding to a
later iteration fits the unknown data blocks better than the reconstruction corresponding to an earlier iteration.
This is reasonable, since the more data that is received and used, the better the intermediate reconstruction should fit the 
whole data.

\begin{theorem} \label{thm:cvg}
	Under assumption \eqref{eqn:asm} and let ${\vect x}^*$ be a minimiser of ${\cal F}_{\vect y}$ in \eqref{eqn:fb-p-ol}, 
	the sequence $\big\{{\cal F}_{\vect y}({\vect x}^{(i)}) \big\}_{i}$ produced by the online method 
	Algorithm \ref{alg:ofb} is monotone decreasing to ${\cal F}_{\vect y}({\vect x}^*)$.
\end{theorem}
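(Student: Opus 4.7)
The plan is to establish the monotone decrease of the sequence $\{{\cal F}_{\vect y}(\vect x^{(i)})\}_i$ by splitting ${\cal F}_{\vect y}$ into the part involving the already-assimilated blocks and the part involving the not-yet-arrived blocks, and then to obtain convergence to ${\cal F}_{\vect y}(\vect x^*)$ by combining the bounded-monotone argument with the standard forward--backward convergence theory that applies once $b=B$.

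First, fix an iteration index $i$ and let $b$ denote the number of data blocks that have been loaded by the time $\vect x^{(i+1)}$ is computed. I would write the trivial decomposition
\begin{equation*}
{\cal F}_{\vect y}(\vect x^{(i)}) \;=\; {\cal F}_{\vect y_1^b}(\vect x^{(i)}) \;+\; \sum_{k=b+1}^{B} g_{k}(\vect x^{(i)}),
\end{equation*}
and the same with $i$ replaced by $i+1$. The forward--backward update in line~\eqref{eqn:fb-ol-i} is the standard forward--backward step for the truncated problem \eqref{eqn:fb-p-ol-part}; under the stated Lipschitz hypothesis on each $g_k$ and the usual step-size choice $\lambda^{(b)} \in (0, 1/\beta_{\text{Lip}}^{(b)}]$, the classical descent lemma (see, e.g., the references to \citealt{CP10,CW05,BC11}) gives the monotone descent
\begin{equation*}
{\cal F}_{\vect y_1^b}(\vect x^{(i+1)}) \;\le\; {\cal F}_{\vect y_1^b}(\vect x^{(i)}).
\end{equation*}

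Next I would combine this inequality with assumption \eqref{eqn:asm}, which controls exactly the leftover tail $\sum_{k=b+1}^{B} g_{k}$. Adding the two inequalities term by term to the decomposition above yields ${\cal F}_{\vect y}(\vect x^{(i+1)}) \le {\cal F}_{\vect y}(\vect x^{(i)})$, which is the monotone decrease claimed. Since $\vect x^*$ is a minimiser of ${\cal F}_{\vect y}$, the sequence is bounded below by ${\cal F}_{\vect y}(\vect x^*)$ and therefore has a limit $\ell \ge {\cal F}_{\vect y}(\vect x^*)$. To identify $\ell$ with ${\cal F}_{\vect y}(\vect x^*)$, I would note that after the last block has been loaded ($b=B$) the iteration \eqref{eqn:fb-ol-i} reduces to the ordinary forward--backward recursion on the full objective ${\cal F}_{\vect y}$, whose convergence to a minimiser is standard; hence ${\cal F}_{\vect y}(\vect x^{(i)}) \downarrow {\cal F}_{\vect y}(\vect x^*)$.

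The main obstacle, in my view, is not the descent itself (which follows straightforwardly from the decomposition once assumption \eqref{eqn:asm} is invoked) but rather the conceptual point that the descent inequality $ {\cal F}_{\vect y_1^b}(\vect x^{(i+1)}) \le {\cal F}_{\vect y_1^b}(\vect x^{(i)})$ must be compatible with the step size $\lambda^{(b)}$ used at iteration $i$: the Lipschitz constant of $\nabla g^{b}_{1}$ grows with $b$, so one must verify that $\lambda^{(b)}$ is chosen small enough uniformly in $b$ for the single-iteration descent to hold. A secondary delicate point is the passage to the limit ${\cal F}_{\vect y}(\vect x^{(i)}) \to {\cal F}_{\vect y}(\vect x^*)$: strictly speaking this needs the inner stopping criterion of type II to allow sufficiently many iterations once $b=B$, so that standard convergence of forward--backward applies; otherwise one only obtains convergence to the infimum of the iterates, and the identification with ${\cal F}_{\vect y}(\vect x^*)$ requires, in addition, lower semi-continuity of ${\cal F}_{\vect y}$ together with a subsequential argument.
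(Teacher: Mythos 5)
Your proposal follows essentially the same route as the paper's own proof: the decomposition ${\cal F}_{\vect y}(\cdot) = {\cal F}_{{\vect y}_1^b}(\cdot) + \sum_{k=b+1}^{B} g_{k}(\cdot)$, the descent inequality ${\cal F}_{{\vect y}_1^b}({\vect x}^{(i+1)}) \le {\cal F}_{{\vect y}_1^b}({\vect x}^{(i)})$ from standard forward--backward theory, and assumption \eqref{eqn:asm} to control the tail are exactly the three ingredients the authors combine. If anything, you are more careful than the paper, which does not address the step-size/Lipschitz-constant issue for growing $b$ nor the identification of the limit with ${\cal F}_{\vect y}({\vect x}^*)$ (it simply appeals to ``standard convergence results of splitting methods''), so your two caveats are legitimate refinements rather than gaps.
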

\begin{proof}
	We just need to verify, $\forall i \in \mathbb{Z}$ (the set of all integers),
	\begin{equation}  \label{eqn:ieq}
		{\cal F}_{\vect y}({\vect x}^{(i)}) \ge {\cal F}_{\vect y}({\vect x}^{(i+1)}).
	\end{equation}
	Obviously, the above inequality \eqref{eqn:ieq} is correct if ${\vect x}^{(i+1)}$ is produced using all $B$ visibility blocks.
	Otherwise, if ${\vect x}^{(i+1)}$ is associated with $b$ visibility blocks, $b < B$, we have
	\begin{equation}  \label{eqn:ieq-i}
		{\cal F}_{{\vect y}_1^b}({\vect x}^{(i)}) \ge {\cal F}_{{\vect y}_1^b}({\vect x}^{(i+1)}).
	\end{equation}
	The above inequality \eqref{eqn:ieq-i} is obtained using the convergence property of splitting methods: the total 
	energy of the objective functional with respect to fixed visibilities (input) is monotone decreasing.
	Then, using \eqref{eqn:asm} and \eqref{eqn:ieq-i}, and the fact that
	\begin{equation} 
		{\cal F}_{\vect y}(\cdot) = {\cal F}_{{\vect y}_1^b}(\cdot) + \sum_{k=b+1}^{B} g_{k} (\cdot), 
	\end{equation}
	we have 
	\begin{align}
	{\cal F}_{\vect y}({\vect x}^{(i)})  & =  {\cal F}_{{\vect y}_1^b}({\vect x}^{(i)}) + \sum_{k=b+1}^{B} g_{k} ({\vect x}^{(i)}), \\
	& \ge  {\cal F}_{{\vect y}_1^b}({\vect x}^{(i+1)}) + \sum_{k=b+1}^{B} g_{k} ({\vect x}^{(i+1)}), \\ 
	 & = {\cal F}_{\vect y}({\vect x}^{(i+1)}).
	\end{align}
	Thus inequality \eqref{eqn:ieq} still holds if ${\vect x}^{(i+1)}$ is associated with $b$ visibility blocks, $b < B$. 
	This completes the proof.
\end{proof}

\section{Online RI Imaging}\label{sec:alg-ri}
In this section we present the explicit procedures of applying the general online method proposed 
in the previous section ({\it i.e.}, Algorithm \ref{alg:ofb}) to find MAP estimators for the RI imaging problem, 
using both the analysis form \eqref{eqn:ir-un-af} and synthesis form \eqref{eqn:ir-un-sf}.
Moreover, we also discuss the visibility storage requirements and computational costs
of our online method. We use the label \ $\bar{}$ \ for symbols related to the analysis model
and \ $\hat{}$ \ for symbols related to the synthesis model (following \citealt{CPM17,CPM17B}).

\subsection{MAP estimation by online convex optimisation}
\subsubsection{Analysis}
Set ${\bar f}({\vect x}) = \mu \|\bm{\mathsf{\Psi}}^\dagger {\vect x}\|_1$ and 
${\bar g}_{k}({\vect x}) = \|{\vect y}_{k} - \bm{\mathsf{\Phi}}_{k} {\vect x}\|_2^2/2\sigma^2$, $k = 1, \ldots, B$.
Then the reconstruction problem for the analysis form \eqref{eqn:ir-un-af}, $\forall b \in \{1, \cdots, B\}$, {\it i.e.},
\begin{equation} \label{eqn:fb-p-ol-part-ana}
	\min_{\vect x} \Big\{{\bar f}({\vect x}) + {\bar g}^{b}_1({\vect x})  \Big\},
\end{equation}
can be solved by the forward-backward  iteration formula given in \eqref{eqn:fb-ol-i}, {\it i.e.},
\begin{equation} \label{eqn:fb-i-ana}
	{\vect x}^{(i+1)} = {\rm prox}_{ \lambda^{(i)} {\bar f}} \left({\vect x}^{(i)} -  \lambda^{(i)} \nabla {\bar g}^{b}_1 ({\vect x}^{(i)}) \right).
\end{equation}
Assume  $\bm{\mathsf{\Psi}}^\dagger \bm{\mathsf{\Psi}} = \bm{\mathsf{ I}}$, where $\bm{\mathsf{ I}}$ is identity matrix. 
We have, $\forall \bar{\vect z} \in \mathbb{R}^N$,
\begin{equation} \label{eqn:prox-a}
	{\rm prox}_{\lambda {\bar f}} (\bar{\vect z}) = \bar{\vect z} + \bm{\mathsf{\Psi}} 
	\left ( {\rm soft}_{\lambda \mu}(\bm{\mathsf{\Psi}}^\dagger \bar{\vect z}) - \bm{\mathsf{\Psi}}^\dagger \bar{\vect z} \right ),
\end{equation}
and
\begin{equation} \label{eqn:grad-a}
	\nabla {\bar g}^{b}_1({\vect x}) = \sum_{k = 1}^{b} \bm{\mathsf{\Phi}}_{k}^\dagger 
	(\bm{\mathsf{\Phi}}_{k} {\vect x} - {\vect y}_{k})/\sigma^2,
\end{equation}
where
${\rm soft}_{\lambda}({\vect z}) = \big({\rm soft}_{\lambda}({z}_1), {\rm soft}_{\lambda}({z}_2), \cdots \big)$ 
is the pointwise soft-thresholding operator of vector ${\vect z}$ defined by
\begin{equation}
	{\rm soft}_{\lambda}({z}_k) =
	\begin{cases}
		{z}_k (|{z}_k|  -  \lambda)/|{z}_k| & {\rm if} \  |{z}_k| > \lambda, \\
		0                                   & {\rm otherwise}.               
	\end{cases} 
\end{equation}
Refer to \cite{CPM17} for the derivation of \eqref{eqn:prox-a}.
Substituting \eqref{eqn:prox-a} and \eqref{eqn:grad-a} into \eqref{eqn:fb-i-ana},
then the analysis problem \eqref{eqn:fb-p-ol-part-ana} can be solved iteratively by
\begin{align} 
	{\vect v}^{(i+1)} & = {\vect x}^{(i)} -  \lambda^{(i)} \sum_{k = 1}^{b} \bm{\mathsf{\Phi}}_{k}^\dagger                                                                                                               
	(\bm{\mathsf{\Phi}}_{k} {\vect x}^{(i)} - {\vect y}_{k})/\sigma^2, \label{eqn:ir-un-af-fb-i-1}  \\
	{\vect x}^{(i+1)} & = {\vect v}^{(i+1)} \! + \! \bm{\mathsf{\Psi}} \left ( {\rm soft}_{\lambda^{(i)} \mu}(\bm{\mathsf{\Psi}}^\dagger {\vect v}^{(i+1)}) \! -\! \bm{\mathsf{\Psi}}^\dagger {\vect v}^{(i+1)} \right ), 
	\label{eqn:ir-un-af-fb-i-2}
\end{align}
with initialisation set to, {\it e.g.} ${\vect x}^{(0)} = \bm{\mathsf{\Phi}}_{1}^{\dagger} {\vect y}_{1}$, {\it i.e.} the dirty image
according to the first data block. Note, importantly, that the term related to ${\vect y}_{k}$ in \eqref{eqn:ir-un-af-fb-i-1}, 
{\it i.e.}, $\bm{\mathsf{\Phi}}_{k}^{\dagger} {\vect y}_{k}$, can be computed once in advance.
\begin{remark} 
	In the analysis form \eqref{eqn:ir-un-af}, when choosing a $\bm{\mathsf{\Psi}}$ such that $\bm{\mathsf{\Psi}}^\dagger \bm{\mathsf{\Psi}} \neq \bm{\mathsf{ I}}$, 
	{\it i.e.} an over-complete basis $\bm{\mathsf{\Psi}}$, then ${\rm prox}_{\lambda {\bar f}} (\bar{\vect z})$ can be computed in an iterative manner: 
	\begin{align} 
		{\vect u}^{(t+\frac{1}{2})} & = \lambda_{\rm ite}^{(t)} ({\vect 1} \! - \! {\rm prox}_{\lambda \mu \|\cdot \|_1/\lambda_{\rm ite}^{(t)}}) 
		\left(\frac{{\vect u}^{(t-\frac{1}{2})}}{\lambda_{\rm ite}^{(t)}} \! + \! \bm{\mathsf{\Psi}}^{\dagger} {\vect u}^{(t)}  \right), \\
		{\vect u}^{(t+1)}           & = \bar{\vect z} - \bm{\mathsf{\Psi}} {\vect u}^{(t+\frac{1}{2})},                                           
	\end{align}
	where $\lambda_{\rm ite}^{(t)} \in (0, 2/\beta_{\rm Par})$ ($\beta_{\rm Par}$ is a constant satisfying 
	$\|\bm{\mathsf{\Psi}} \vect z \|^2 \le \beta_{\rm Par} \|\vect z \|^2, \forall \vect z \in \mathbb{R}^L$)  
	is a predefined step size and ${\vect u}^{(t)} \rightarrow {\rm prox}_{\lambda {\bar f}} (\bar{\vect z})$;
	refer to \citet{FS09} and \citet{JHF11} and references therein for details.
	Here, we repeat the remark given in \cite{CPM17,CPM17B} for ease of reference.
\end{remark}

\subsubsection{Synthesis}
Set $\hat{f}(\vect a) = \mu \|{\vect a}\|_1$ and 
$\hat{g}_{k}(\vect a) = \|{\vect y}_{k}-\bm{\mathsf{\Phi}}_{k} \bm{\mathsf{\Psi}} {\vect a}\|_2^2/2\sigma^2$.
Then, similar to \eqref{eqn:fb-p-ol-part-ana}, the reconstruction problem for the synthesis form \eqref{eqn:ir-un-sf}, $\forall b \in \{1, \cdots, B\}$, {\it i.e.},
\begin{equation} \label{eqn:fb-p-ol-part-syn}
	\min_{\vect x} \Big\{{\hat f}({\vect a}) + {\hat g}^{b}_1({\vect a})  \Big\},
\end{equation}
can be solved by the forward-backward  iteration formula given in \eqref{eqn:fb-ol-i}, {\it i.e.},
\begin{equation} \label{eqn:fb-i-syn}
	{\vect a}^{(i+1)} = {\rm prox}_{ \lambda^{(i)} {\hat f}} \left({\vect a}^{(i)} -  \lambda^{(i)} \nabla {\hat g}^{b}_1 ({\vect a}^{(i)}) \right). 
\end{equation}
We have, $\forall \hat{\vect z} = (\hat{z}_1, \cdots, \hat{z}_L) \in \mathbb{R}^L$,
\begin{equation} \label{eqn:prox}
	\begin{split}
		{\rm prox}_{\lambda  {\hat f}} (\hat{\vect z})  
		&=  {\rm prox}_{\lambda \mu \|\cdot \|_1} (\hat{\vect z})   \\
		&=  \argmin_{{\vect u} \in \mathbb{R}^L}  \lambda \mu \|{\vect u}\|_1 + \|{\vect u} - \hat{\vect z}\|^2/2   \\
		&= {\rm soft}_{\lambda \mu}(\hat{\vect z})
	\end{split}
\end{equation}
and 
\begin{equation} \label{eqn:grad}
	\nabla {\hat g}^{b}_1 ({\vect a}) =   \sum_{k = 1}^{b} \bm{\mathsf{\Psi}}^\dagger \bm{\mathsf{\Phi}}_{k}^\dagger 
	(\bm{\mathsf{\Phi}}_{k}\bm{\mathsf{\Psi}} {\vect a} - {\vect y}_{k})/\sigma^2.
\end{equation}
Substituting \eqref{eqn:prox} and \eqref{eqn:grad} into \eqref{eqn:fb-i-syn}, we get the iterative formula solving the synthesis
problem \eqref{eqn:fb-p-ol-part-syn}, {\it i.e.},
\begin{equation} \label{eqn:ir-un-sf-fb-i}
	\begin{split}
		{\vect a}^{(i+1)} & = {\rm soft}_{\lambda^{(i)} \mu} \Big ({\vect a}^{(i)}    \\
		& \qquad - \lambda^{(i)}  \sum_{k = 1}^{b} \bm{\mathsf{\Psi}}^\dagger  
		\bm{\mathsf{\Phi}}_{k}^\dagger (\bm{\mathsf{\Phi}}_{k}\bm{\mathsf{\Psi}} {\vect a}^{(i)} - {\vect y}_{k})/\sigma^2 \Big).  
	\end{split}	
\end{equation}
Like \eqref{eqn:grad-a}, the term related to ${\vect y}_{k}$ in \eqref{eqn:ir-un-sf-fb-i},
{\it i.e.}, $\bm{\mathsf{\Psi}}^\dagger \bm{\mathsf{\Phi}}_{k}^\dagger {\vect y}_{k}$,
can be computed once in advance.

\begin{remark}
	Note that, in \eqref{eqn:ir-un-af-fb-i-1} and \eqref{eqn:ir-un-sf-fb-i}, the operators $\bm{\mathsf{\Phi}}_{k}^\dagger\bm{\mathsf{\Phi}}_{k}$ and 
	$\bm{\mathsf{\Psi}}^\dagger \bm{\mathsf{\Phi}}_{k}^\dagger \bm{\mathsf{\Phi}}_{k}\bm{\mathsf{\Psi}}$ can be precomputed for later invoking.
	Most importantly, as already noted, the terms $\bm{\mathsf{\Phi}}_{k}^\dagger {\vect y}_{k}$ (the so-called dirty map according to the $k$-th data block) 
	and $\bm{\mathsf{\Psi}}^\dagger \bm{\mathsf{\Phi}}_{k}^\dagger {\vect y}_{k}$ respectively 
	in \eqref{eqn:ir-un-af-fb-i-1} and \eqref{eqn:ir-un-sf-fb-i}, for $k= 1, \ldots, b$, can also be computed just once for subsequent use.
\end{remark}

We summarise the online forward-backward splitting algorithm corresponding to the analysis and synthesis reconstruction forms 
in Algorithms \ref{alg:ofb-a} and \ref{alg:ofb-s}, in which the stopping criteria mentioned in Algorithm \ref{alg:ofb} are specified explicitly. 
In particular, when processing individual visibility blocks, for simplicity and efficiency,
we execute one iteration ({\it cf.} stopping criterion type II in Algorithm  \ref{alg:ofb}) 
for each block.
Furthermore, when the number of visibility blocks $B$ is small (less than the iteration number necessary for a standard
forward-backward algorithm to converge), a few more optional iterations can then be applied after the algorithms process 
the last visibility block. 
However, in practice $B$ is generally very large in order to reduce storage costs, especially with the trend towards big-data,
in which case these optional iterations are not necessary. 

\begin{algorithm} 
	\caption{Online forward-backward  algorithm for the analysis model \eqref{eqn:ir-un-af}} \label{alg:ofb-a}
	\textbf{Input:} ${\vect x}^{(0)} \in \mathbb{R}^N$, $\sigma$ and $\lambda^{(b)} \in (0, \infty)$\\
	\textbf{Output:} ${\vect x}^*$ \vspace{0.05in} \\ 
	$i = 0, b = 0, {\vect v}_{\rm t1} = 0$  \vspace{0.05in} \\
	// Online update \\
	\Do{New visibility block}{
		$b=b+1$\\
		{\bf load} visibility ${\vect y}_{b}$ \\ 
		${\vect v}_{\rm t1} = {\vect v}_{\rm t1} + \lambda^{(b)} \bm{\mathsf{\Phi}}_{b}^\dagger {\vect y}_{b}/\sigma^2$ \\
		{\bf delete} visibility ${\vect y}_{b}$ \\
		${\vect v}_{\rm t2} = \lambda^{(b)} \sum_{k = 1}^{b} \bm{\mathsf{\Phi}}_{k}^\dagger \bm{\mathsf{\Phi}}_{k} {\vect x}^{(i)}/\sigma^2$ \\
		update ${\vect v}^{(i+1)} = {\vect x}^{(i)} + {\vect v}_{\rm t1} - {\vect v}_{\rm t2}   $ \\
		compute $\vect u = \bm{\mathsf{\Psi}}^\dagger {\vect v}^{(i+1)}$ \\
		update ${\vect x}^{(i+1)} = {\vect v}^{(i+1)} + \bm{\mathsf{\Psi}} \left ( {\rm soft}_{\lambda^{(i)} \mu}(\vect u) - \vect u \right )$  \\
		$i=i+1$ 
		}      \vspace{0.05in}
																												
	// Update with all assimilated visibilities (optional) \\
	\While{Maximum iteration number is not reached (cf. stopping criterion type II in Algorithm \ref{alg:ofb})} {
		${\vect v}_{\rm t2} = \lambda^{(b)} \sum_{k = 1}^{b} \bm{\mathsf{\Phi}}_{k}^\dagger \bm{\mathsf{\Phi}}_{k} {\vect x}^{(i)}/\sigma^2$ \\
		update ${\vect v}^{(i+1)} = {\vect x}^{(i)} + {\vect v}_{\rm t1} - {\vect v}_{\rm t2}   $ \\
		compute $\vect u = \bm{\mathsf{\Psi}}^\dagger {\vect v}^{(i+1)}$ \\
		update ${\vect x}^{(i+1)} = {\vect v}^{(i+1)} + \bm{\mathsf{\Psi}} \left ( {\rm soft}_{\lambda^{(i)} \mu}(\vect u) - \vect u \right )$  \\
		$i=i+1$
	}
	\vspace{0.05in}
																														
	set ${\vect x}^* = {\vect x}^{(i)}$ 
\end{algorithm}
\begin{algorithm} 
	\caption{Online forward-backward  algorithm for the synthesis model \eqref{eqn:ir-un-sf}} \label{alg:ofb-s} 
	\textbf{Input:} ${\vect a}^{(0)} \in \mathbb{R}^L$, $\sigma$ and $\lambda^{(i)} \in (0, \infty)$\\
	\textbf{Output:} ${\vect a}^*$ \vspace{0.05in} \\ 
	$i = 0, b = 0, {\vect v}_{\rm temp} = 0$  \vspace{0.05in} \\
	// Online update \\
	\Do{New visibility block}{
		$b=b+1$ \\
		{\bf load} visibility ${\vect y}_{b}$ \\ 
		${\vect v}_{\rm t1} = {\vect v}_{\rm t1} + \lambda^{(b)}  \bm{\mathsf{\Psi}}^\dagger \bm{\mathsf{\Phi}}_{b}^\dagger {\vect y}_{b}/\sigma^2$ \\
		{\bf delete} visibility ${\vect y}_{b}$ \\
		${\vect v}_{\rm t2} = \lambda^{(b)} \sum_{k = 1}^{b} \bm{\mathsf{\Psi}}^\dagger  
		\bm{\mathsf{\Phi}}_{k}^\dagger \bm{\mathsf{\Phi}}_{k}\bm{\mathsf{\Psi}} {\vect a}^{(i)}/\sigma^2$ \\
		update ${\vect a}^{(i+1)} = {\rm soft}_{\lambda^{(b)} \mu} ({\vect a}^{(i)} + {\vect v}_{\rm t1} - {\vect v}_{\rm t2} ) $  \\
		$i=i+1$
		}   \vspace{0.05in}
																																		
	// Update with all assimilated visibilities (optional) \\
	\While{Maximum iteration number is not reached (cf. stopping criterion type II in Algorithm \ref{alg:ofb})}{
		${\vect v}_{\rm t2} = \lambda^{(b)} \sum_{k = 1}^{b} \bm{\mathsf{\Psi}}^\dagger  
		\bm{\mathsf{\Phi}}_{k}^\dagger \bm{\mathsf{\Phi}}_{k}\bm{\mathsf{\Psi}} {\vect a}^{(i)}/\sigma^2$ \\
		update ${\vect a}^{(i+1)} = {\rm soft}_{\lambda^{(b)} \mu} ({\vect a}^{(i)} + {\vect v}_{\rm t1} - {\vect v}_{\rm t2} ) $  \\
		$i=i+1$
		} \vspace{0.05in}
																																							set ${\vect a}^* = {\vect a}^{(i)}$
																																		
\end{algorithm}

\addtolength{\tabcolsep}{-\tabL}
\begin{figure*}
	\centering
	\begin{tabular}{cc}
		\includegraphics[trim={{.055\linewidth} {.02\linewidth} {.02\linewidth} {.05\linewidth}}, clip, width=0.48\linewidth, height = 0.36\linewidth]
		{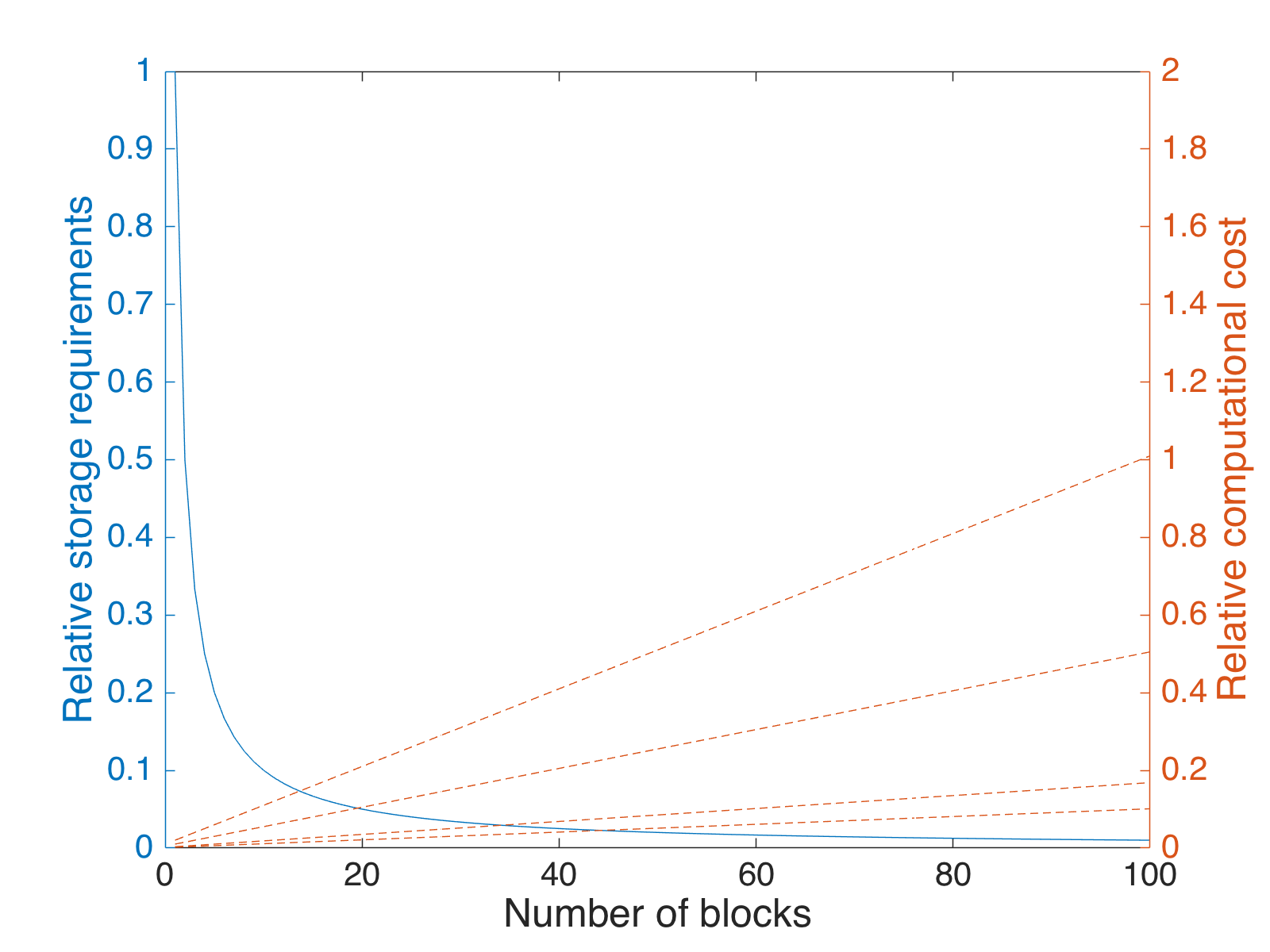} 
		\put(-190,120){Maximum iteration number of} 
		\put(-190,112){the standard method:} 
		\put(-186,100){ 50}  \put(-175,102){ \vector(4,-3){56}} 
		\put(-190,90){ 100}  \put(-175,92){ \vector(4,-3){68}} 
		\put(-190,80){ 300} \put(-175,82){ \vector(4,-3){75}} 
		\put(-190,70){ 500} \put(-175,72){ \vector(4,-3){67}}
		\hspace{0.05in}
		    &     
		\includegraphics[trim={{.055\linewidth} {.02\linewidth} {.02\linewidth} {.05\linewidth}}, clip, width=0.48\linewidth, height = 0.36\linewidth]
		{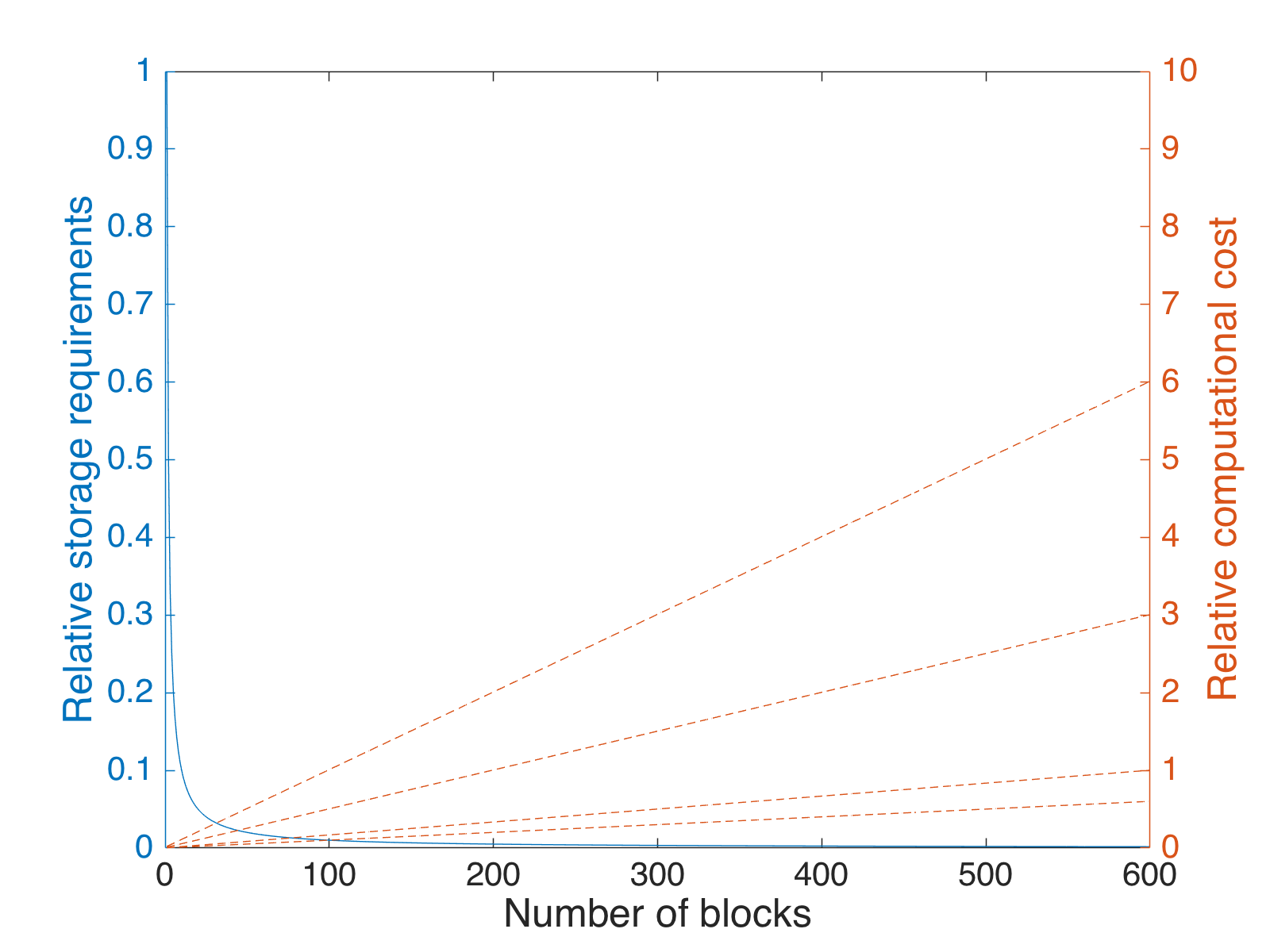} 
		\put(-190,120){Maximum iteration number of} 
		\put(-190,112){the standard method:} 
		\put(-186,100){ 50}  \put(-175,102){ \vector(4,-3){50}} 
		\put(-190,90){ 100}  \put(-175,92){ \vector(4,-3){64}} 
		\put(-190,80){ 300} \put(-175,82){ \vector(4,-3){73}} 
		\put(-190,70){ 500} \put(-175,72){ \vector(4,-3){66}} \\
		(a) & (b) 
	\end{tabular} 
	\caption{ Comparison between the standard algorithm and the online algorithm (this work) in terms of visibility storage requirements
		and computational cost. In the plots, the left vertical-axis represents the ratio of visibility storage requirements 
		(described in the text as $\eta_{\rm s} = 1/B$ when all blocks have the same size) between the online algorithm 
		with different number of visibility blocks and the standard algorithm (blue solid curve); the right vertical-axis represents the approximate 
		ratio of computational cost between the online algorithm and the standard algorithm with 
		different maximum iteration numbers, {\it i.e.}, $i_{\rm max} = 50, 100, 300$ and $500$ (brown dashed lines). 
		In particular, panels (a) and (b) correspond to a maximum number of visibility blocks set to 100 and 600, respectively.
		These plots show that as the number of visibility blocks increases the online method needs significantly less storage   
		than the offline method. The computational cost can also be reduced by approximately a half using the online method when both
		methods execute similar number of iterations. 
	}
	\label{fig-ite-comp-memory}
\end{figure*}
\addtolength{\tabcolsep}{\tabL}

\subsection{Storage requirements}
We discuss visibility storage requirements of the proposed online method in RI imaging.
In the forthcoming big-data era storing the visibilities $\vect y$ will be challenging.
Standard offline methods, which need all visibilities to be acquired and stored in advance, have extremely large storage requirements.
The proposed online method for RI imaging, as illustrated in Algorithms \ref{alg:ofb-a} and \ref{alg:ofb-s},
can dramatically reduce storage requirements. 

The online method only needs to deal with a single block of visibilities ({\it i.e.}, a subset of the visibilities) at one time. 
The size of each block can be controlled as required: when a large storage volume is available, a large visibility block can be considered; 
otherwise, any arbitrarily small block can be considered, to the extreme case of just a single visibility in each block (see 
lines 7 and 8 respectively in Algorithms \ref{alg:ofb-a} and \ref{alg:ofb-s} about the visibility block loading and assimilation).
Note that after loading and assimilating a block by the online method, the storage used to store that block will be released 
for storing another block (see line 9 in Algorithms \ref{alg:ofb-a} and \ref{alg:ofb-s} about the visibility block storage releasing). 
The ratio of visibility storage required for the online method relative to the offline method, which must store all $M$ visibilities,
is therefore
\begin{equation} \label{eqn:storage}
\eta_{\rm s} = \frac{\max_k\{M_k\}}{M}.
\end{equation} 
When all blocks are the same size, the storage requirement is \mbox{$\eta_{\rm s} = 1/B$} of the total visibilities, 
which means less than 1 percent of visibilities need to be stored when $B>100$.  
Figure \ref{fig-ite-comp-memory} (the blue solid curve) shows the ratio of visibility storage requirements 
between the online algorithm and the standard algorithm for different number of visibility blocks. 

Another important advantage of the online method in terms of storage requirement is that, due to its independence 
with respect to the number of visibility blocks, it has the ability of tackling RI imaging
problems encountered with an arbitrarily large amount of visibilities -- just divide the entire visibilities into 
individual visibilities blocks and then conquer them one-by-one online.  

Finally, since the standard offline methods can only deal with a complete set of visibilities, when new 
visibilities are available it is not possible for standard methods to use the new input to improve their reconstruction quality
in a principled manner (unless the computation is restarted). 
The online method, on the contrary, is able to immediately process any new observed visibilities 
-- just treat the new input as a normal visibility block and assimilate it to update the reconstruction.

It should be noted that storage during the image reconstruction process is not only burdened by the measurements, but also by storing 
the baseline coordinates and weights, which are comparable. Furthermore, the interpolation kernel for performing a  two dimensional 
non-uniform fast Fourier Transform can take up to 16 or more times the amount of storage from the 
measurements alone \citep{fes03,off14,PMdCOW16}. However, methods exist to reduce this storage cost. For example, 
it is possible for the interpolation kernel to be calculated on-the-fly, 
or to prune the interpolation kernel. Furthermore, alternative efficient methods can be developed to reduce these
storage costs, {\it e.g.} by precomputing $\bm{\mathsf{\Phi}}^\dagger\bm{\mathsf{\Phi}}$.

\subsection{Computational cost}
We now compare the computational cost between the online method and the standard 
method. Comparing to the standard method, in addition to dramatically reducing storage costs,
the online method can provide considerable computational savings when the number of visibility blocks considered 
is not much larger than the number of iterations necessary for the standard method.

For both the online and standard methods, at each iteration, the most computationally demanding part 
is to apply the measurement operator $\{\bm{\mathsf{\Phi}}_{k}^\dagger\bm{\mathsf{\Phi}}_{k}\}_{k=1}^b$ on an image
(refer to line 10 in Algorithms \ref{alg:ofb-a} and \ref{alg:ofb-s}), 
in that the rest of the computations are highly dominated by this step. 
In particular, for this step the standard method needs 
all the $B$ blocks, {\it i.e.}, $\{\bm{\mathsf{\Phi}}_{k}^\dagger\bm{\mathsf{\Phi}}_{k}\}_{k=1}^B$, to be involved in 
the computation for each iteration, whereas only the first $b$ blocks, $b < B$, are used at the $b$-th iteration 
in the online method.  
In other words, for the online method at iteration $b$, just $b/B$ portion of $\bm{\mathsf{\Phi}}^\dagger$ and $\bm{\mathsf{\Phi}}$ 
is involved in the computation (again, as an example, refer to line 10 in Algorithm \ref{alg:ofb-a}), whereas the 
standard method needs the whole operators. When we consider our online method with $B$ iterations (to ensure convergence $B\gg 1$), 
and where $i_{\rm max}$ denotes the maximum iteration number of the standard method, 
the ratio of the computational cost between the online method and the standard method can be approximated by
\begin{equation} \label{eqn:com-ratio}
	\eta_{\rm c} = \frac{\sum_{b=1}^B b/B} {i_{\rm max}} =  \frac{(B+1)/2}{i_{\rm max}},
\end{equation}
where the numerator and denominator represent the computational costs of the online method and the standard
method, respectively. Figure \ref{fig-ite-comp-memory} (the brown dashed lines) shows the approximate ratio of the computational cost between 
the proposed online method (with different number of visibility blocks, $B$) and the standard method 
(with different maximum iteration numbers, $i_{\rm max}$).

From \eqref{eqn:com-ratio}, we conclude that, when $i_{\rm max}$ is large enough ({\it i.e.}, ${1}/{i_{\rm max}} \approx 0$) 
and both methods execute similar iterations ({\it i.e.}, $B \approx {i_{\rm max}}$), the online method then requires only 
a half of the computations of the standard method, approximately, {\it i.e.},
\begin{equation} 
	\eta_{\rm c} \approx \frac{1}{2}.
\end{equation}
Furthermore, ratio \eqref{eqn:com-ratio} also reveals the following twofold fact: one is that if $1 \ll B \ll {i_{\rm max}}$, 
the online method will be much more economical; the other is that if $B \gg {i_{\rm max}}$, the online method will be 
computationally expensive, but then its visibility storage requirement is extremely low -- 
just approximate $\eta_{\rm s} = 1/B$ of the storage space needed by the standard method (see \eqref{eqn:storage}).
On the whole, the online method provides the choice of 
using a large or small number of visibility blocks $B$. This choice depends on the priority of the application, 
in saving storage space or reducing computational cost. 
Again, see Figure \ref{fig-ite-comp-memory} for the pictorial explanation of the comparison between the online 
and standard methods in terms of visibility storage requirements and computational cost.

Finally, it is worth emphasising that the online method actually achieves a reconstruction once 
no more visibility blocks are available. 
In other words, no matter what the computational costs, the online method executes them
before the data acquisition stage finishes. On the contrary, all of the computational costs of the standard method 
have to be carried out after the data acquisition stage. In this sense the online method always wins at the starting point
of the offline method.

\section{Experimental results}\label{sec:exp}
In this section we investigate the performance of the proposed online method 
using representative RI test images and compare to the standard (offline) method.

\addtolength{\tabcolsep}{-\tabL}
{ \renewcommand{\arraystretch}{0.0}
	\begin{figure}
		\centering
		\begin{tabular}{cc}
			\multicolumn{2}{c}{
			\includegraphics[trim={{.15\linewidth} {.09\linewidth} {.095\linewidth} {.12\linewidth}}, clip, width=0.48\linewidth, height = 0.44\linewidth]
			{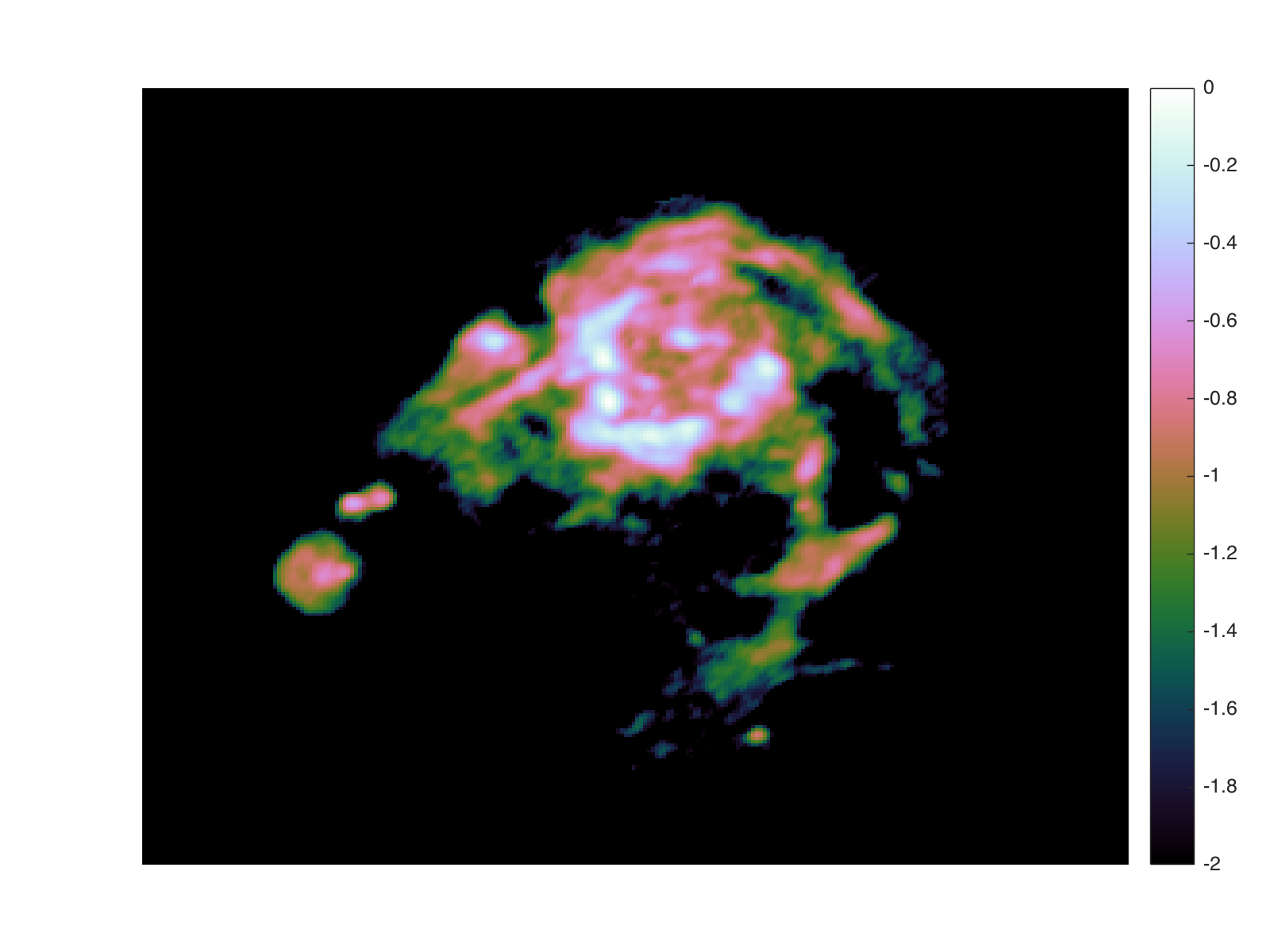} 
			} \vspace{-0.00in}
			\\
			\multicolumn{2}{c}{ (a) M31}
			\\
			\multicolumn{2}{c}{
			\includegraphics[trim={{.15\linewidth} {.09\linewidth} {.095\linewidth} {.12\linewidth}}, clip, width=0.96\linewidth, height = 0.44\linewidth]
			{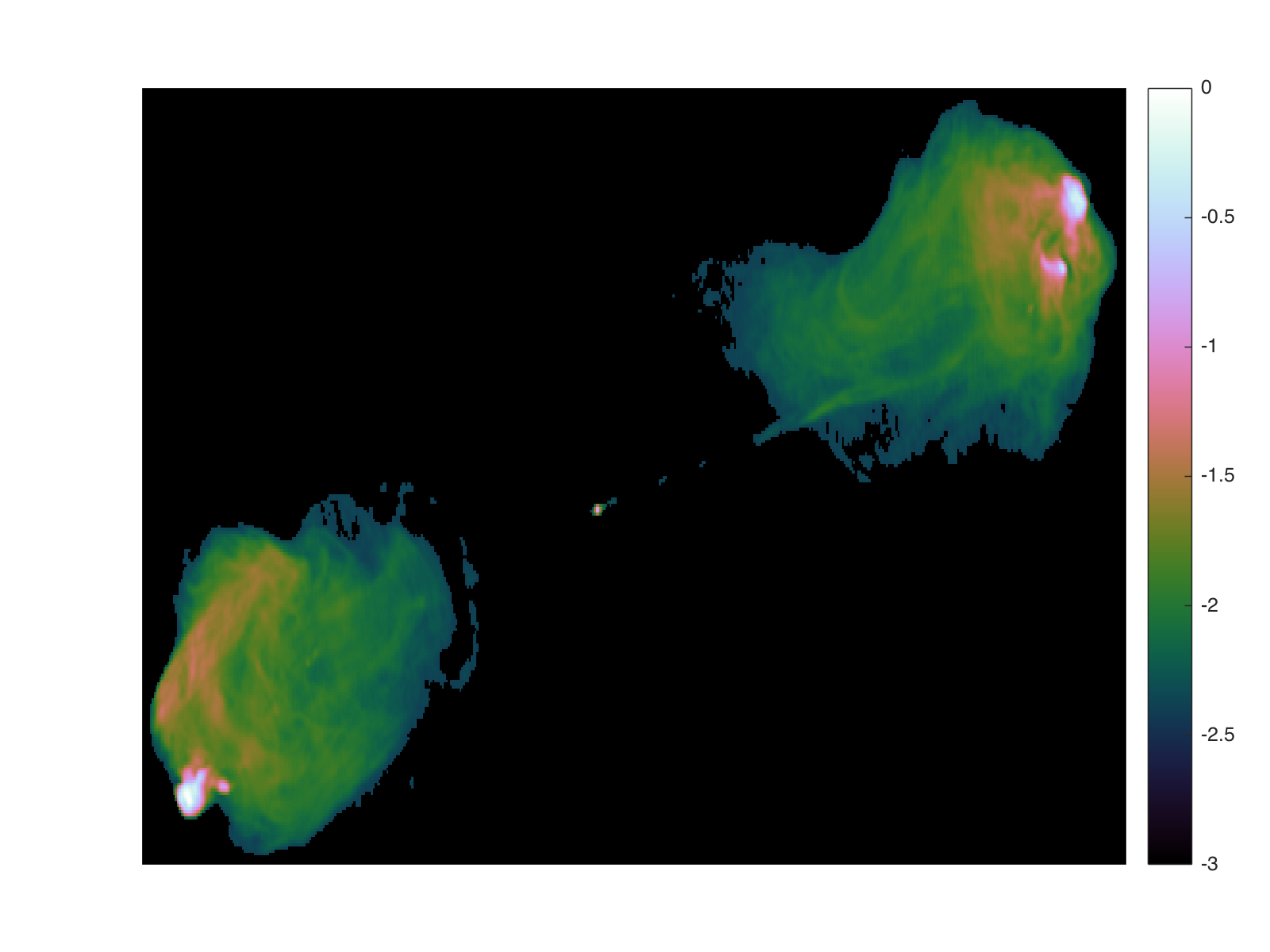} 
			} \vspace{-0.00in}
			\\
			\multicolumn{2}{c}{ (b) Cygnus A }
			\\
			\includegraphics[trim={{.15\linewidth} {.09\linewidth} {.05\linewidth} {.12\linewidth}}, clip, width=0.48\linewidth, height = 0.44\linewidth]
			{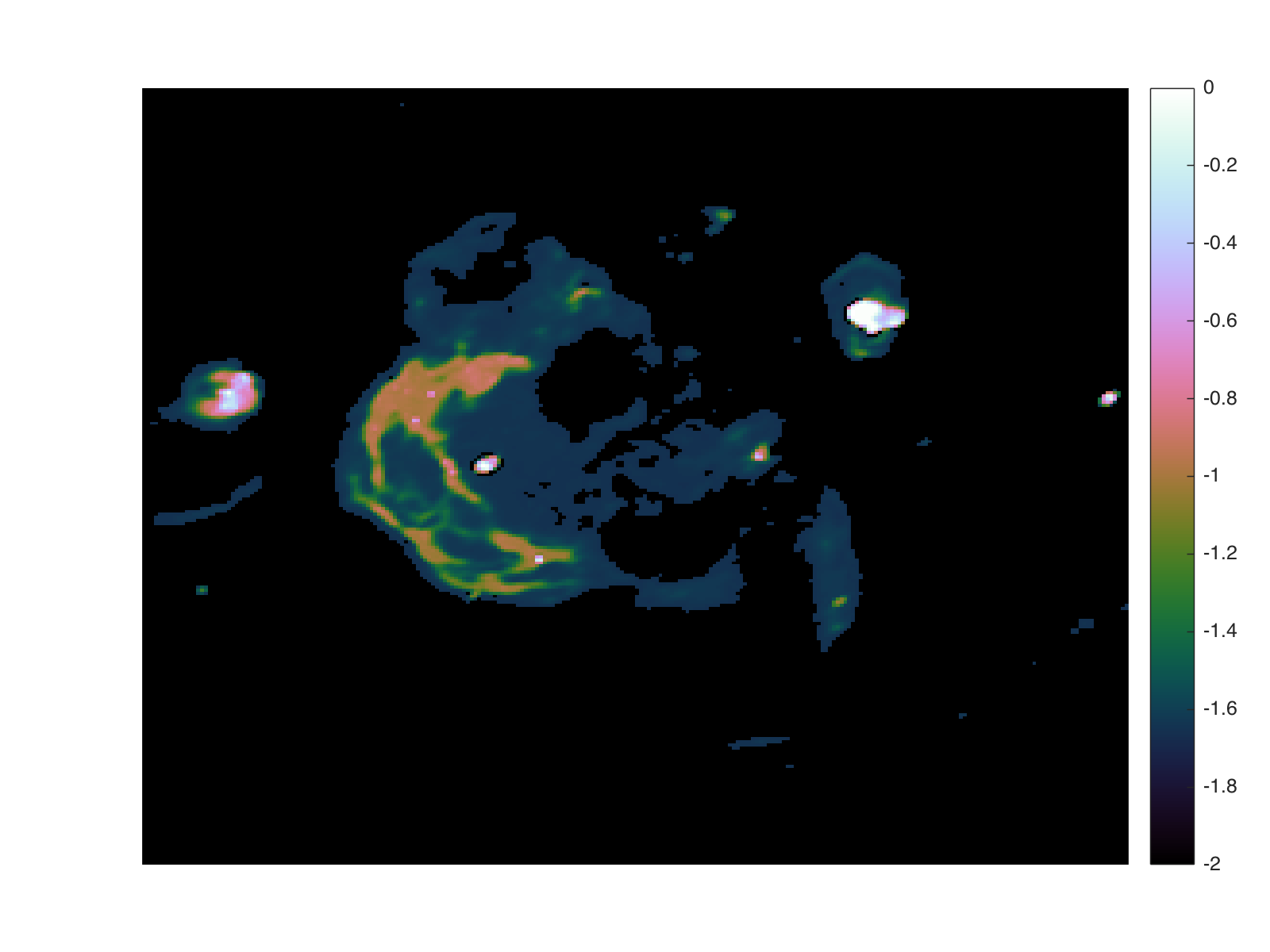} &                     
			\includegraphics[trim={{.15\linewidth} {.09\linewidth} {.05\linewidth} {.12\linewidth}}, clip, width=0.48\linewidth, height = 0.44\linewidth]
			{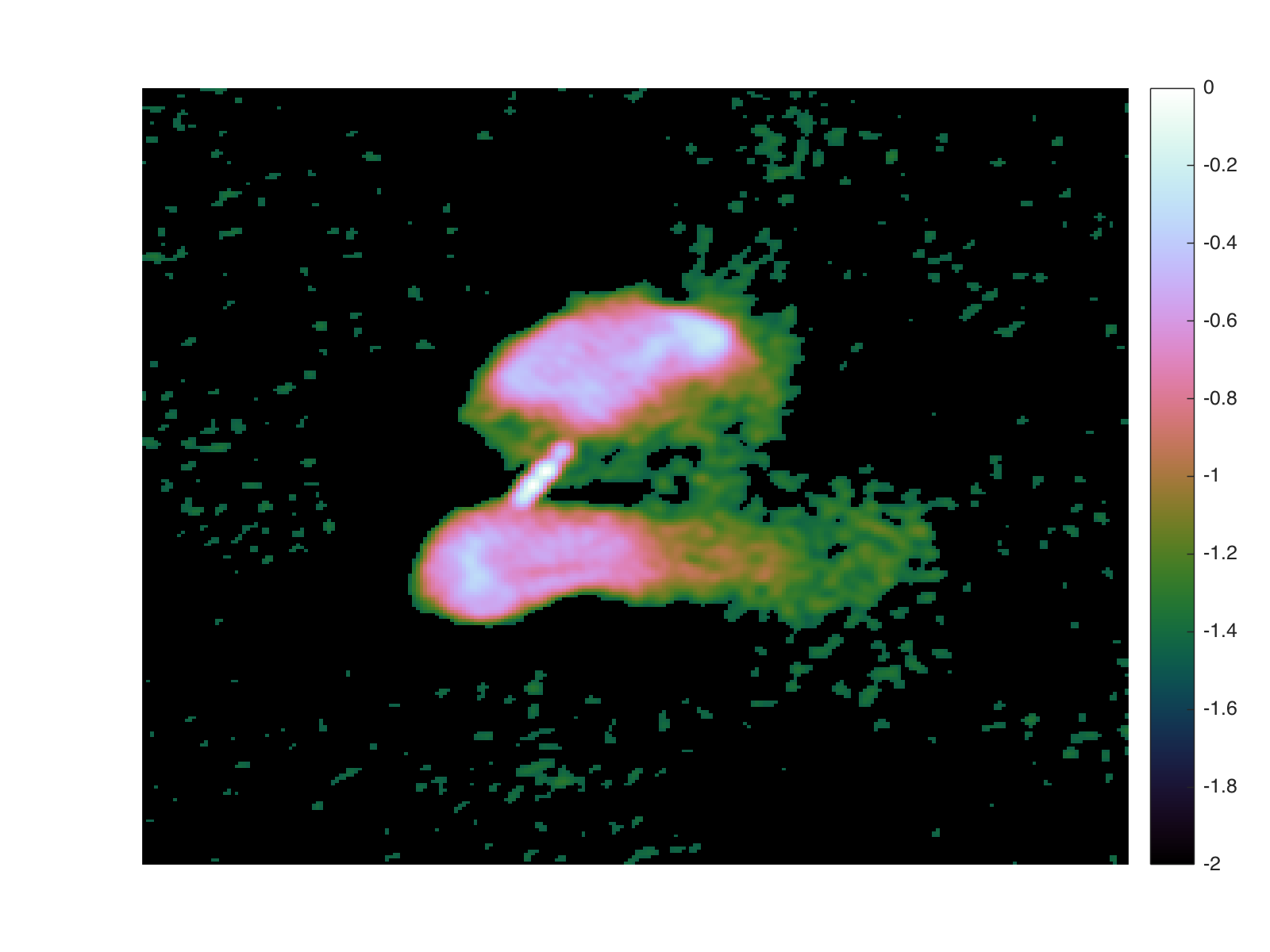} 
			\\
			{\small (c) W28 }      & {\small (d) 3C288 } 
		\end{tabular}
		\caption{Radio images used to test the performance of the online and standard reconstruction methods. 
			Panels (a)--(d): M31, Cygnus A, W28 and 3C288 radio galaxies, respectively. }
		\label{fig-img}
	\end{figure}
}
\addtolength{\tabcolsep}{\tabL}

\subsection{Simulations}
Figure \ref{fig-img} shows the four images used for tests, including an HI region of the M31 galaxy (size $256\times 256$), 
the Cygnus A radio galaxy (size $256\times 512$), the W28 supernova remnant (size $256\times 256$), 
and the 3C288 radio galaxy (size $256\times 256$). The hardware used to perform simulations and 
subsequent numerical experiments is a laptop (Macbook) with CPU of 2.2 GHz, 
four Intel Core i7 processors and memory of 16 GB. All the codes are running on {\sc Matlab} R2015b.

The measurement (sensing) operator, $\bm{\mathsf{\Phi}}_{k}$, $k = 1, \ldots, B$, 
used for simulations, for simplicity, is a Fast Fourier Transform (FFT) operator, $\bm{\mathsf{F}}$,
followed by a masking operation, $\bm{\mathsf{M}}_{k}$, {\it i.e.},
$\bm{\mathsf{\Phi}}_{k} = \bm{\mathsf{M}}_{k}\bm{\mathsf{F}}$.
In principle, this can easily be extended to the more realistic case for measurements that lie off the Fourier grid, 
where $\bm{\mathsf{M}}_{k}$ is replaced with a degridding matrix, and zero padding and degridding correction 
is applied before the the FFT (see \citealt{PMdCOW16} for more details).

The entire visibilities are generated randomly through the variable density sampling profile \citep{PVW11} in half
the Fourier plane with 10\% of discrete Fourier coefficients of each ground truth image. 
Then, the visibilities are corrupted by complex Gaussian noise with zero mean and standard deviation $\sigma$,
where $\sigma = \|f\|_{\infty} 10^{-\textrm{SNR}/20}$, $\|\cdot \|_{\infty}$ is the infinity norm (referring to the maximum absolute value
of components of $f$), and SNR (signal to noise ratio) is set to 30 dB for all simulations. The definition of SNR is given by
\begin{equation}
	{\rm SNR}  = 20 \log_{10} \frac{\|{\vect x}\|_2}{ \|{\vect x} - {\vect x}^* \|_2},
\end{equation}
where ${\vect x}$ is the ground truth of a reconstruction $ {\vect x}^*$.
The basis $\bm{\mathsf{\Psi}}$ in the analysis and synthesis models \eqref{eqn:ir-un-af} and \eqref{eqn:ir-un-sf}
is set to Daubechies 8 wavelets, which can be applied by using the {\sc Matlab} built-in function {\tt wavedec2}.
Note that appreciable difference between the results of 
the analysis and synthesis models is not expected, since this basis satisfies 
$\bm{\mathsf{\Psi}}^\dagger \bm{\mathsf{\Psi}} = \bm{\mathsf{ I}}$. 
Nevertheless, using other varieties ({\it e.g.} overcomplete bases) is straightforward and likely to improve reconstruction fidelity.
Since the focus of this article is the online method rather than optimising imaging performance, we leave the discussion 
of other cases of $\bm{\mathsf{\Psi}}$ for future investigation.

In practice, visibilities will be provided as they are acquired by a telescope and so likely to be observed along $uv$ tracks. 
One simple way to capture this, approximately, in our simplified experimental setup is to take visibilities based on 
distance from the origin. 
In any case, to demonstrate the robustness of the online method with respect to different visibility splitting settings,
we also consider uniformly random visibility selection (from the variable density sampled profile) 
as a visibility splitting rule. 
The performance of the online method with respect to different number of visibility blocks is investigated as well. 

For simplicity, the $\ell_1$ regularisation parameter $\mu$ is fixed to $10^{4}$ by trial-and-error inspection. 
For the standard algorithm, the maximum iteration number $i_{\rm max}$
is used as its stopping criterion, and is set to 50, which is sufficient for reasonably good reconstruction.
Without loss of generality, we report the results corresponding to the analysis model 
and do not show results for the synthesis model. The reason being that the performance difference of 
the tested methods between the synthesis and the analysis models is negligible under an orthogonal basis,
as anticipated.

\addtolength{\tabcolsep}{-\tabL}
\begin{figure*}
	\centering
	\begin{tabular}{ccc}
		\includegraphics[trim={{.15\linewidth} {.07\linewidth} {.02\linewidth} {.072\linewidth}}, clip, width=0.32\linewidth, height = 0.28\linewidth]
		{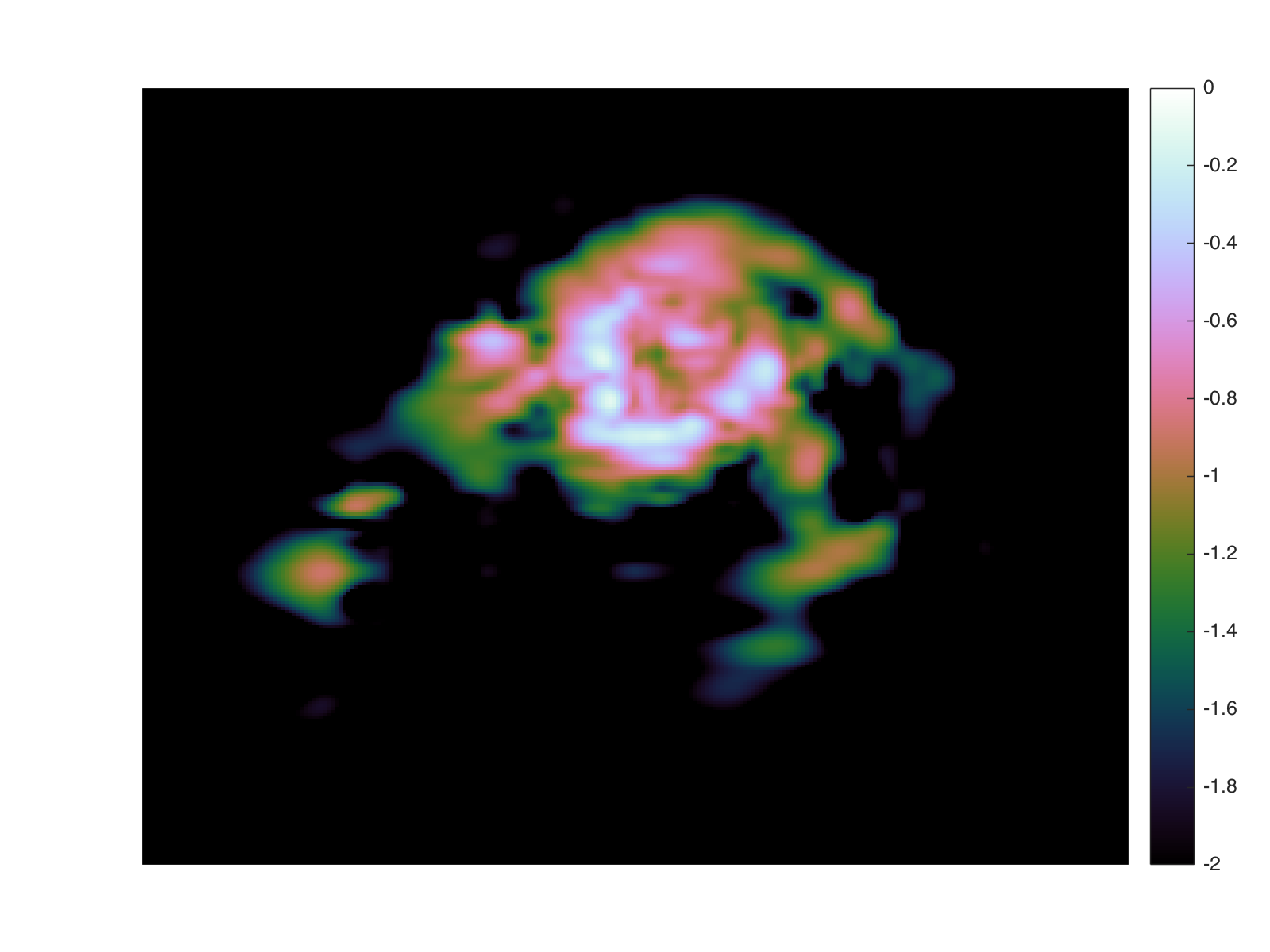}  \put(-170,60){\rotatebox{90}{ M31}}   &
		\includegraphics[trim={{.15\linewidth} {.07\linewidth} {.02\linewidth} {.072\linewidth}}, clip, width=0.32\linewidth, height = 0.28\linewidth]
		{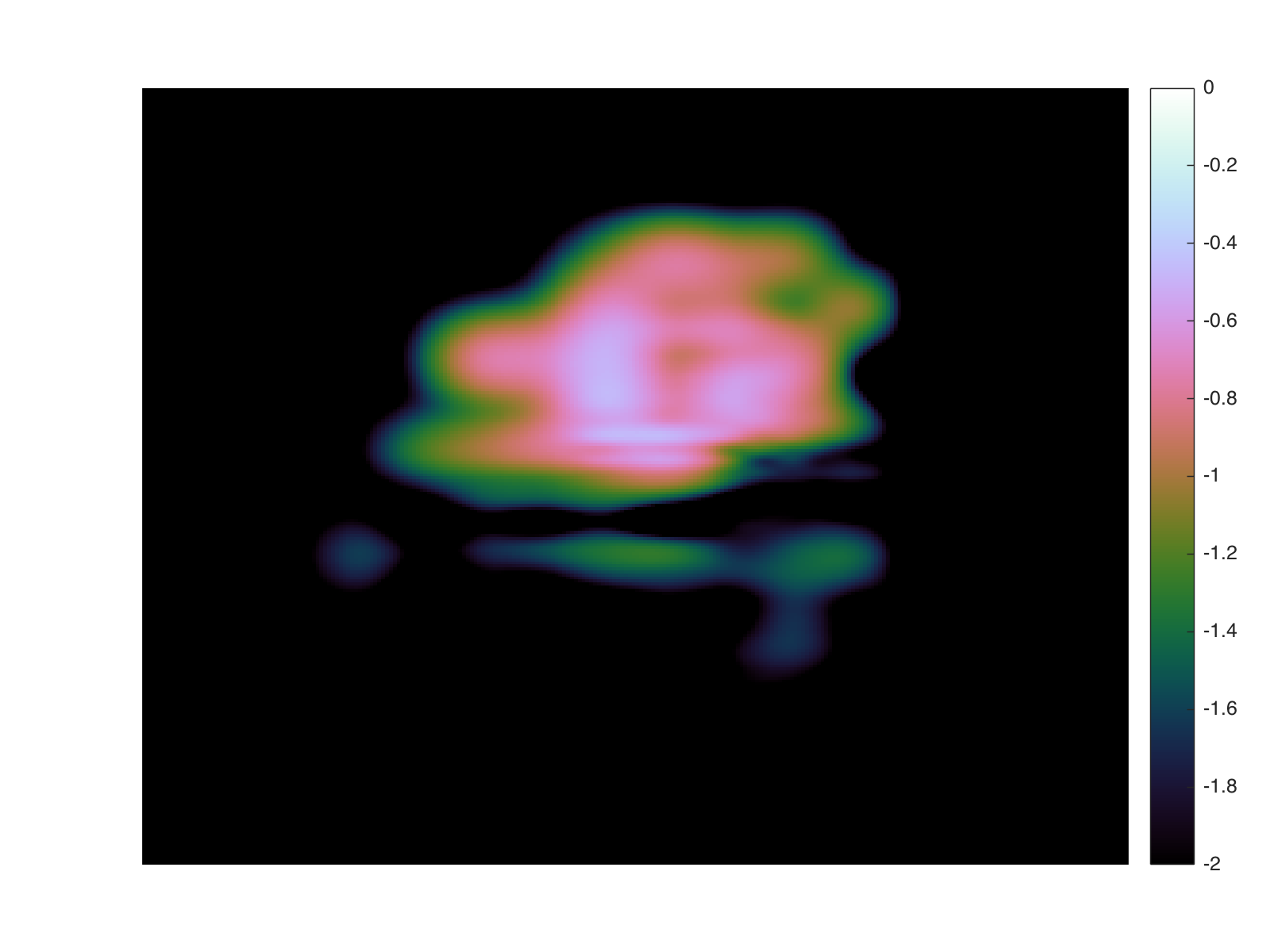} &
		\includegraphics[trim={{.15\linewidth} {.07\linewidth} {.02\linewidth} {.072\linewidth}}, clip, width=0.32\linewidth, height = 0.28\linewidth]
		{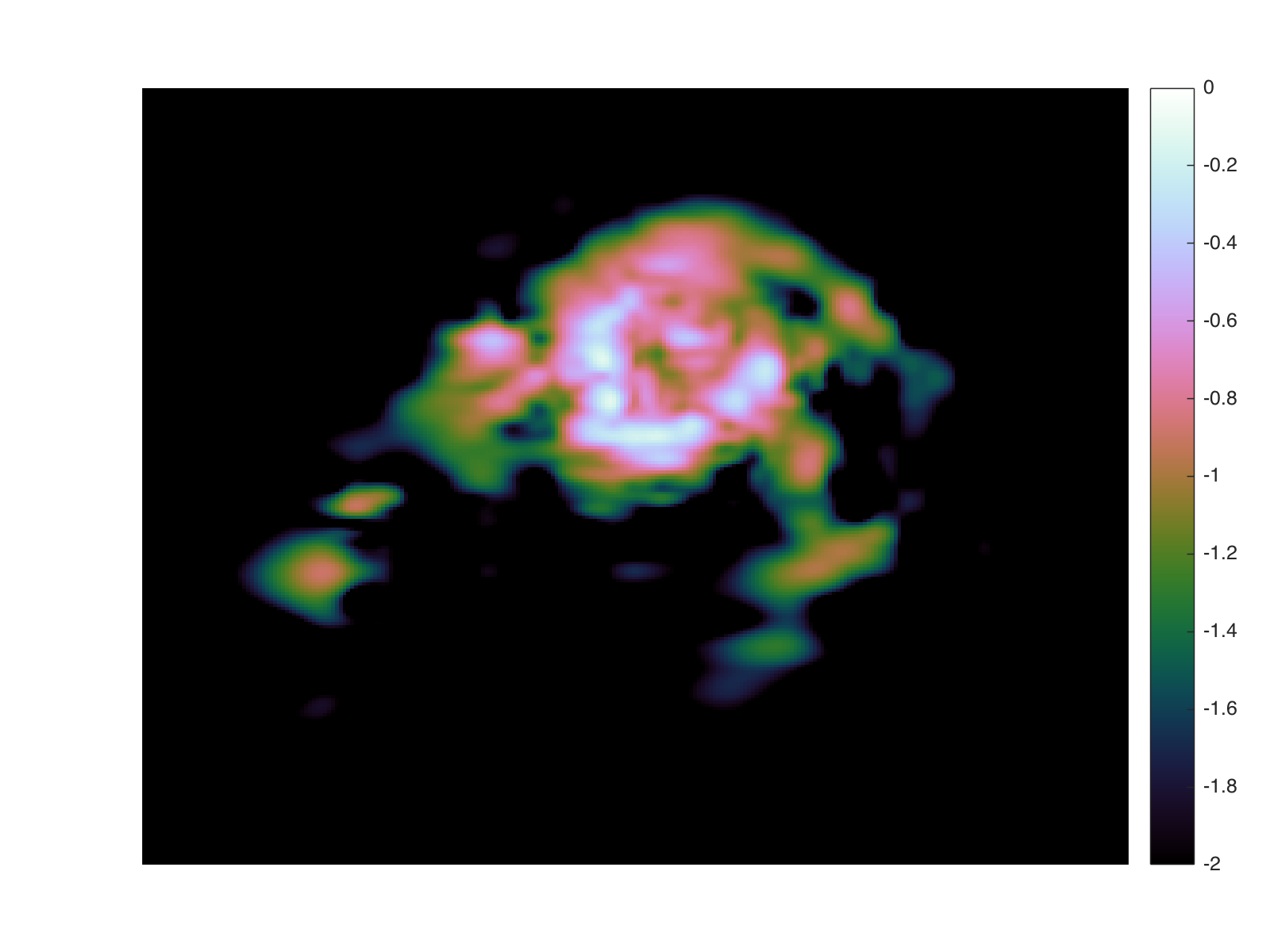} 
		\\
		\includegraphics[trim={{.15\linewidth} {.07\linewidth} {.02\linewidth} {.072\linewidth}}, clip, width=0.32\linewidth, height = 0.16\linewidth]
		{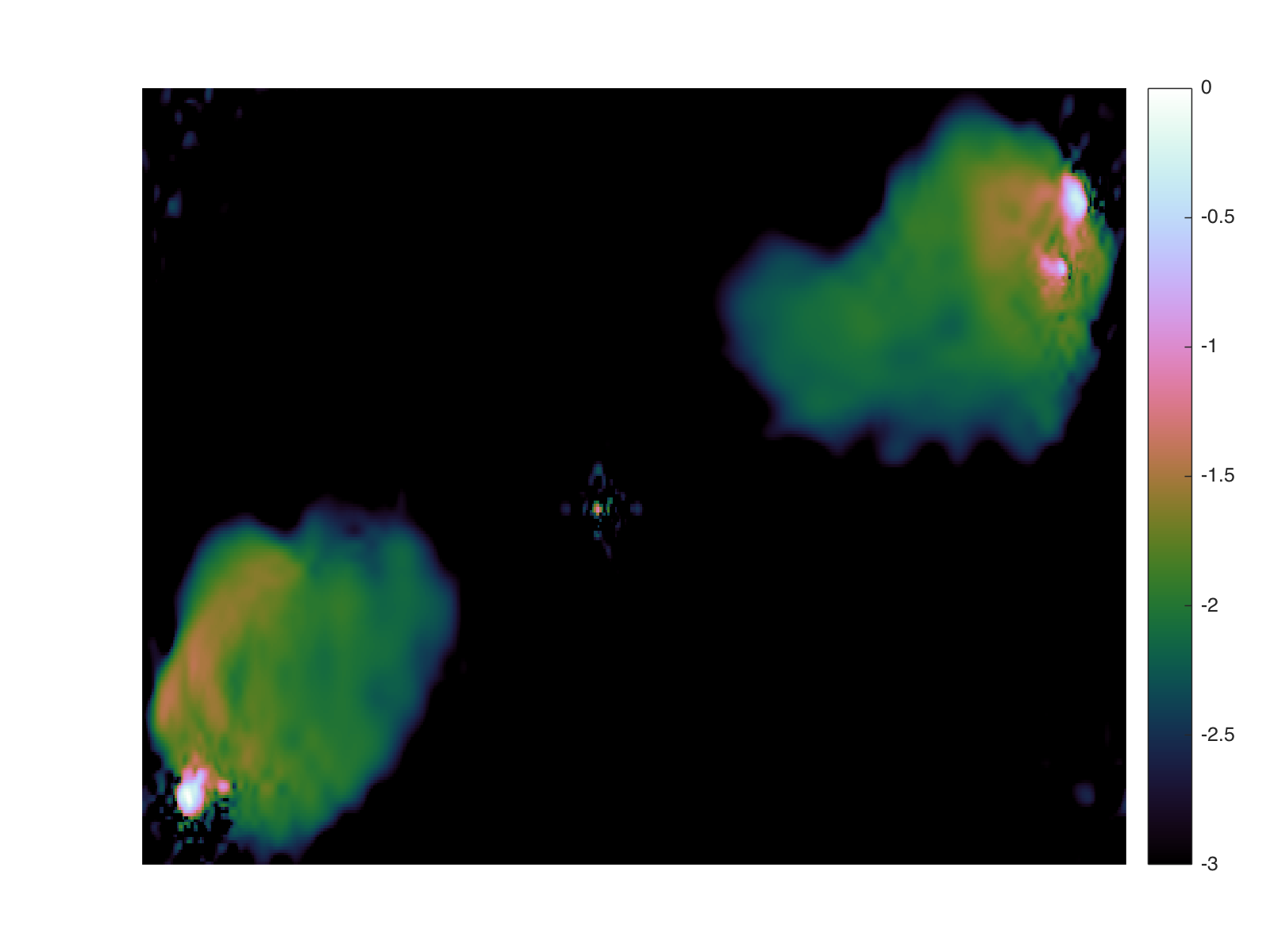}   \put(-170,20){\rotatebox{90}{ Cygnus A}} &
		\includegraphics[trim={{.15\linewidth} {.07\linewidth} {.02\linewidth} {.072\linewidth}}, clip, width=0.32\linewidth, height = 0.16\linewidth]
		{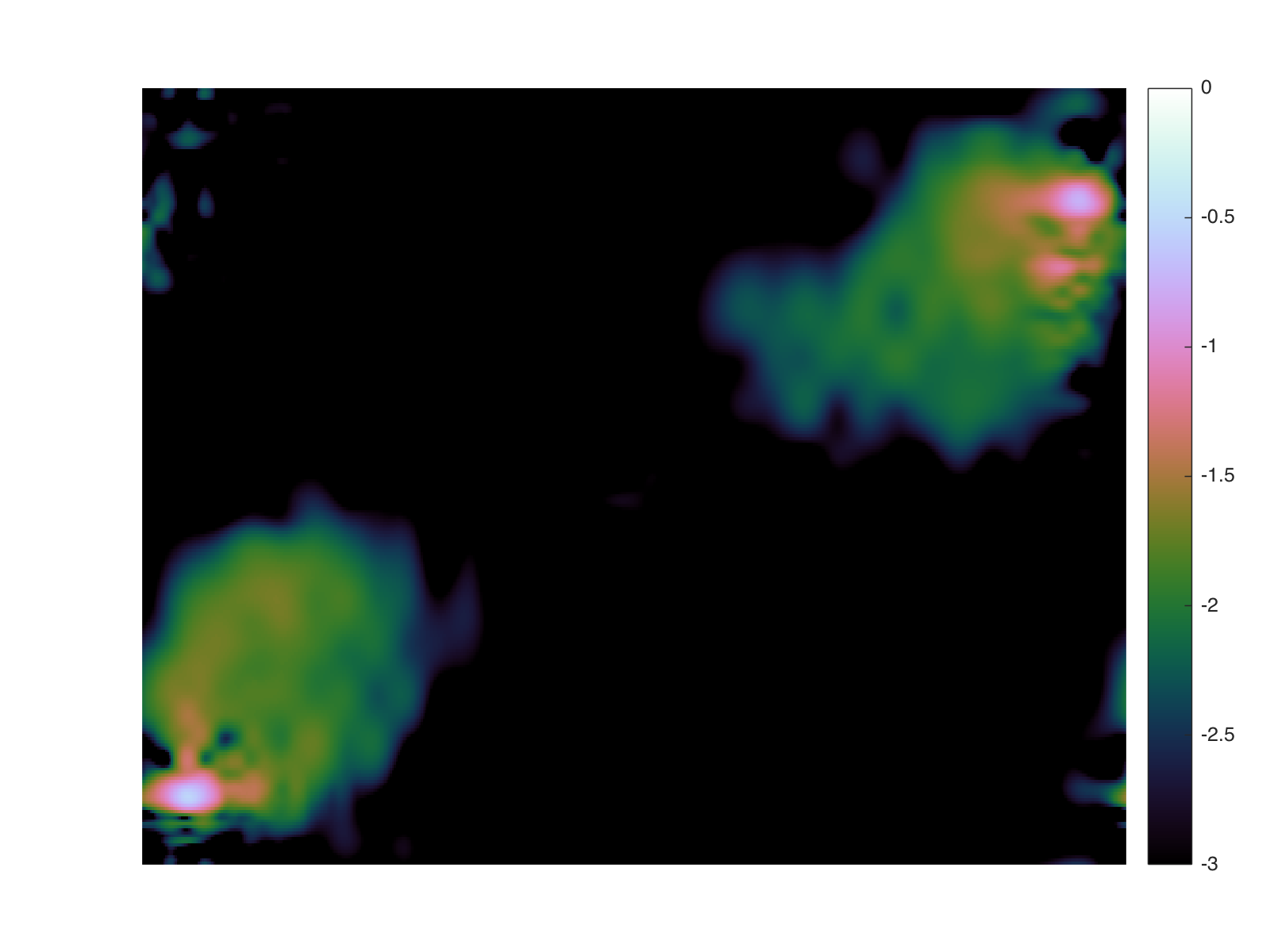} &
		\includegraphics[trim={{.15\linewidth} {.07\linewidth} {.02\linewidth} {.072\linewidth}}, clip, width=0.32\linewidth, height = 0.16\linewidth]
		{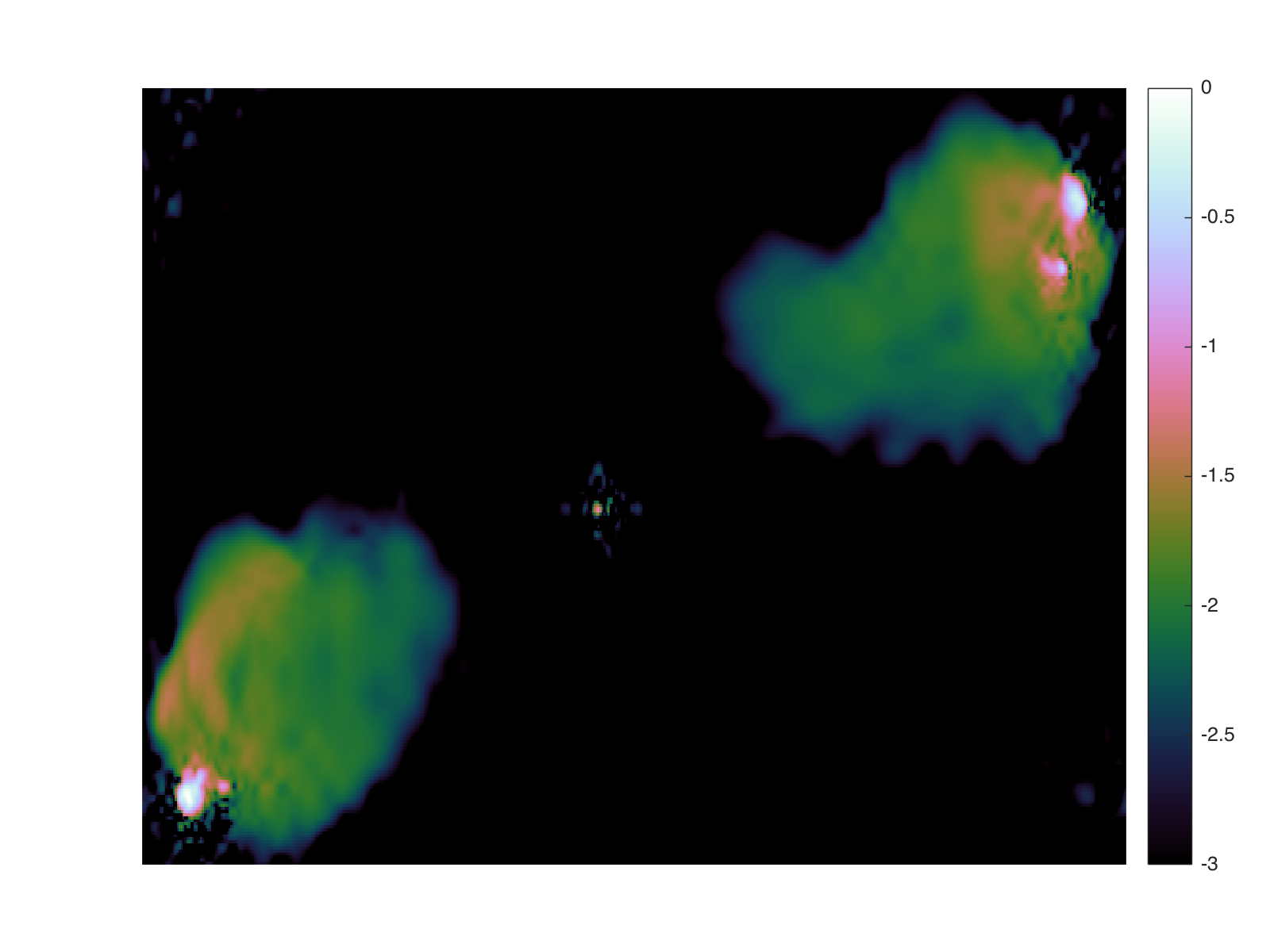} 
		\\
		\includegraphics[trim={{.15\linewidth} {.07\linewidth} {.02\linewidth} {.072\linewidth}}, clip, width=0.32\linewidth, height = 0.28\linewidth]
		{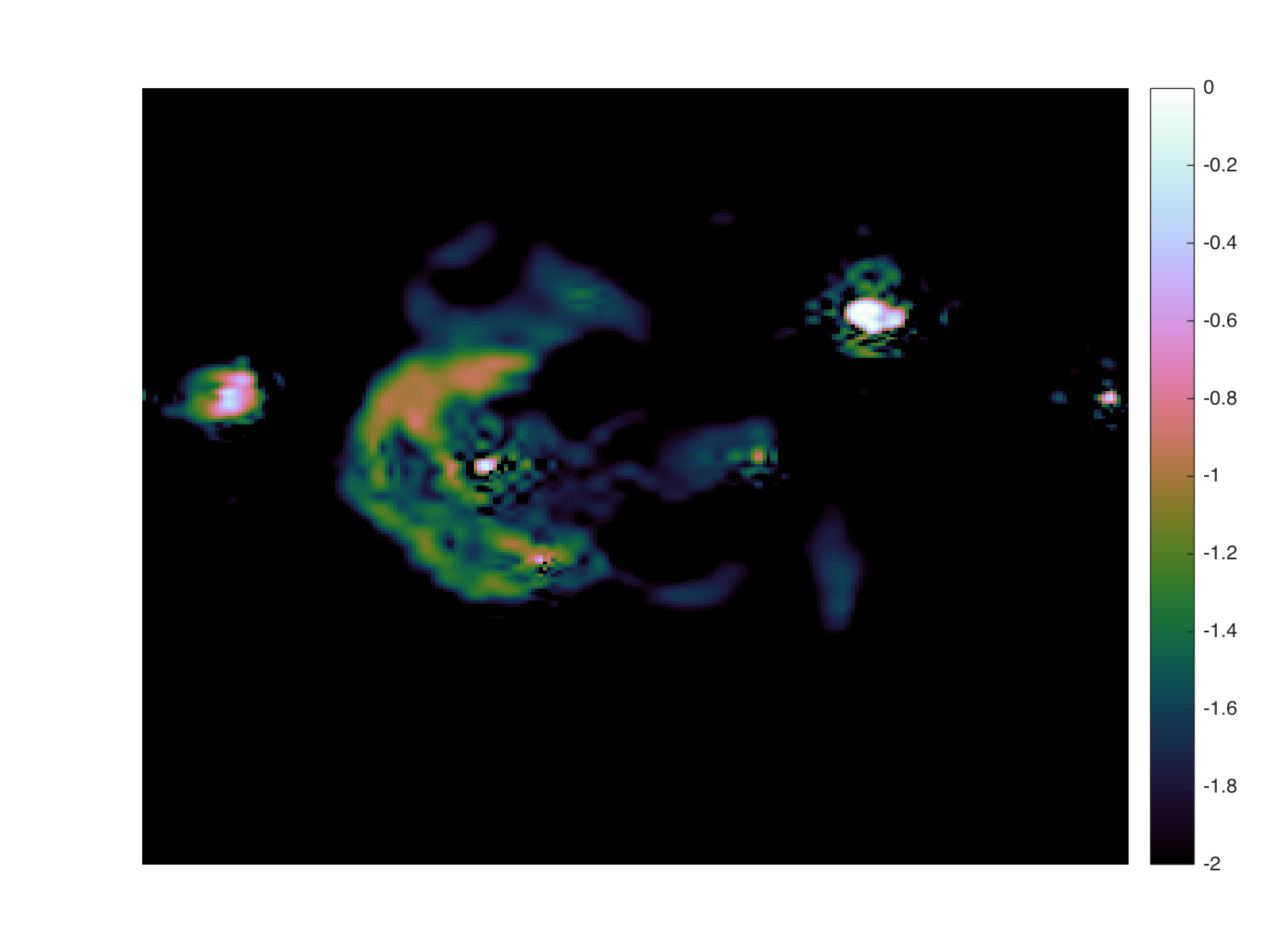}  \put(-170,60){\rotatebox{90}{ W28}} &
		\includegraphics[trim={{.15\linewidth} {.07\linewidth} {.02\linewidth} {.072\linewidth}}, clip, width=0.32\linewidth, height = 0.28\linewidth]
		{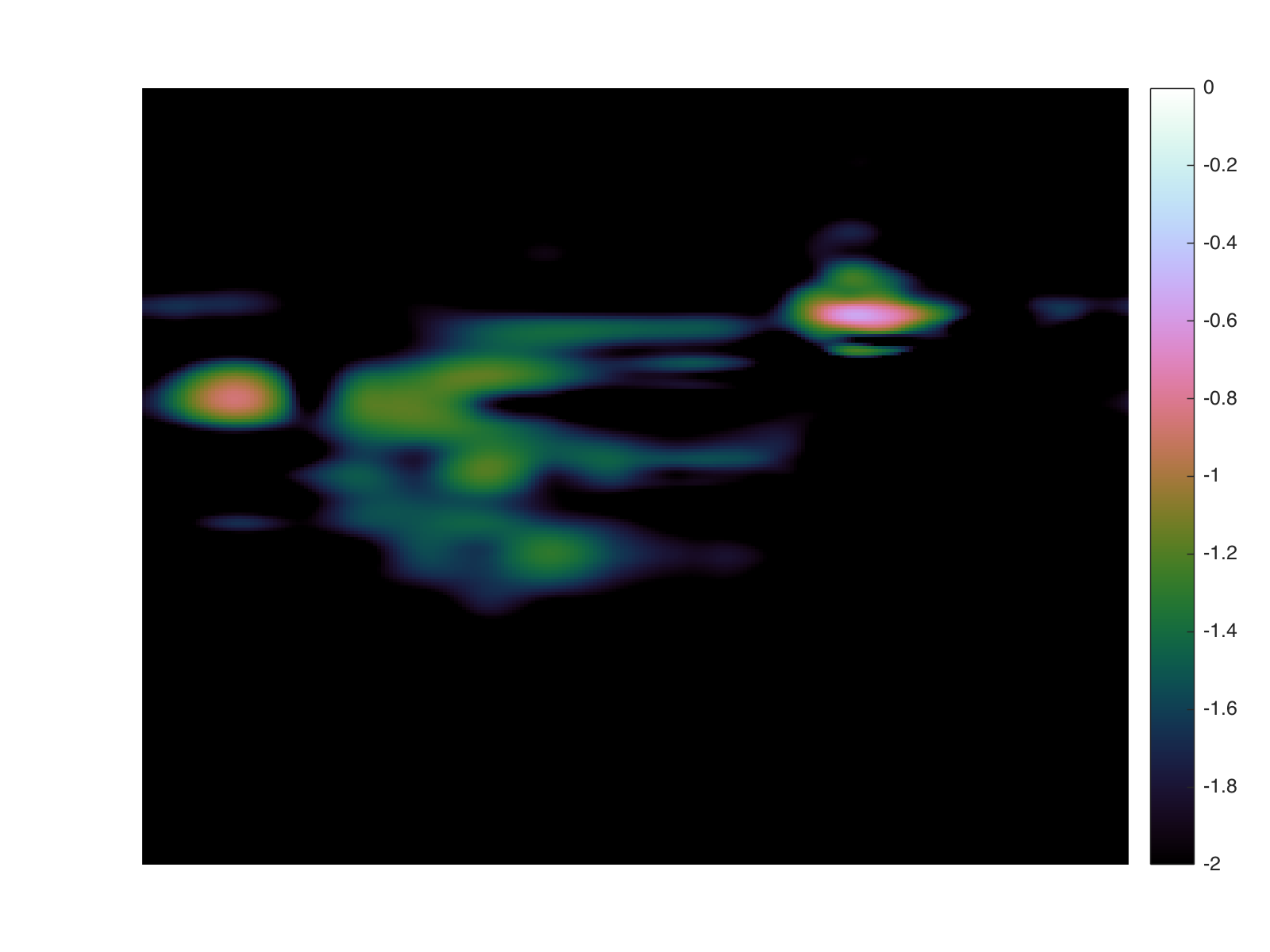} &
		\includegraphics[trim={{.15\linewidth} {.07\linewidth} {.02\linewidth} {.072\linewidth}}, clip, width=0.32\linewidth, height = 0.28\linewidth]
		{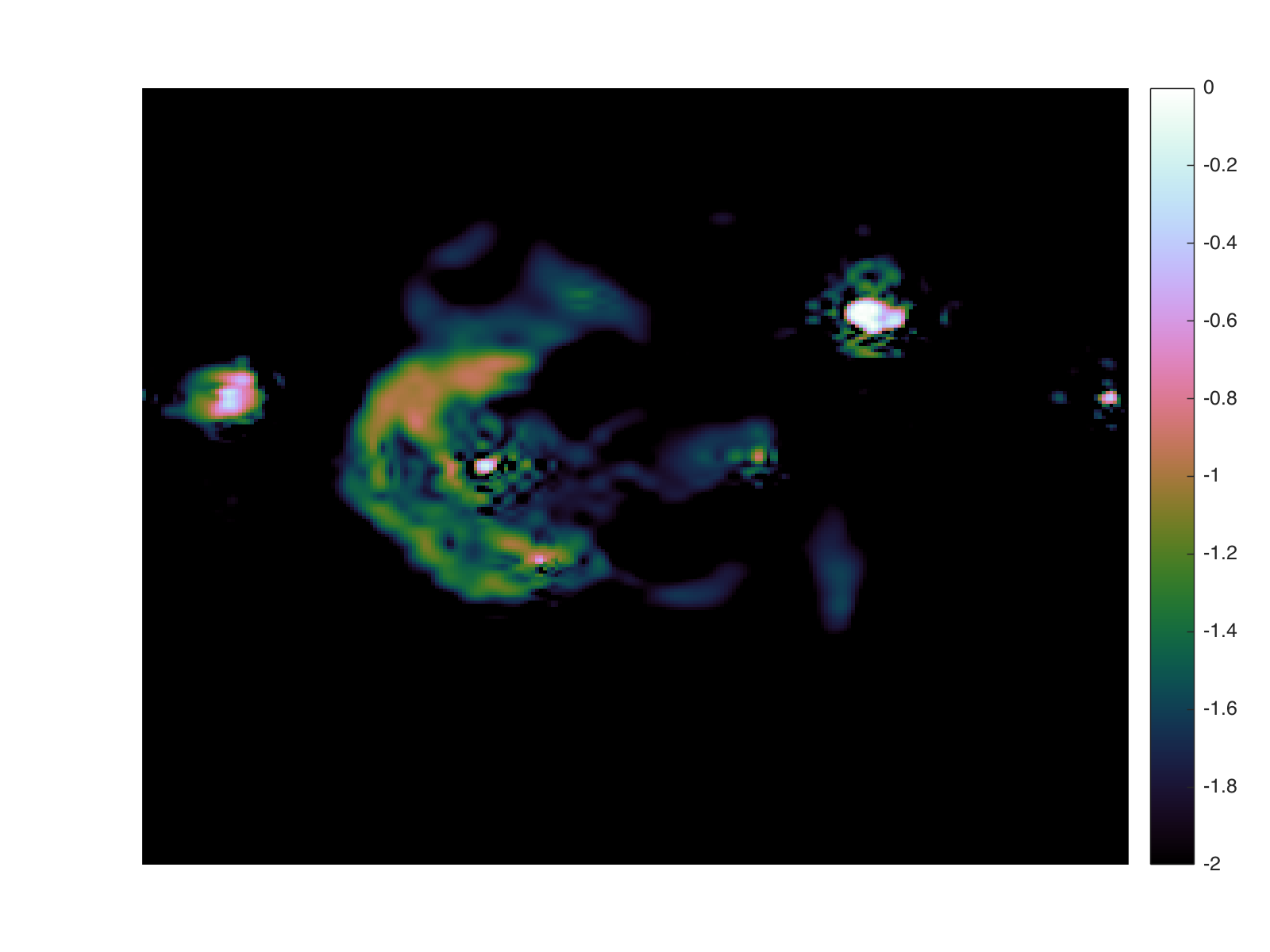} 
		\\
		\includegraphics[trim={{.15\linewidth} {.07\linewidth} {.02\linewidth} {.072\linewidth}}, clip, width=0.32\linewidth, height = 0.28\linewidth]
		{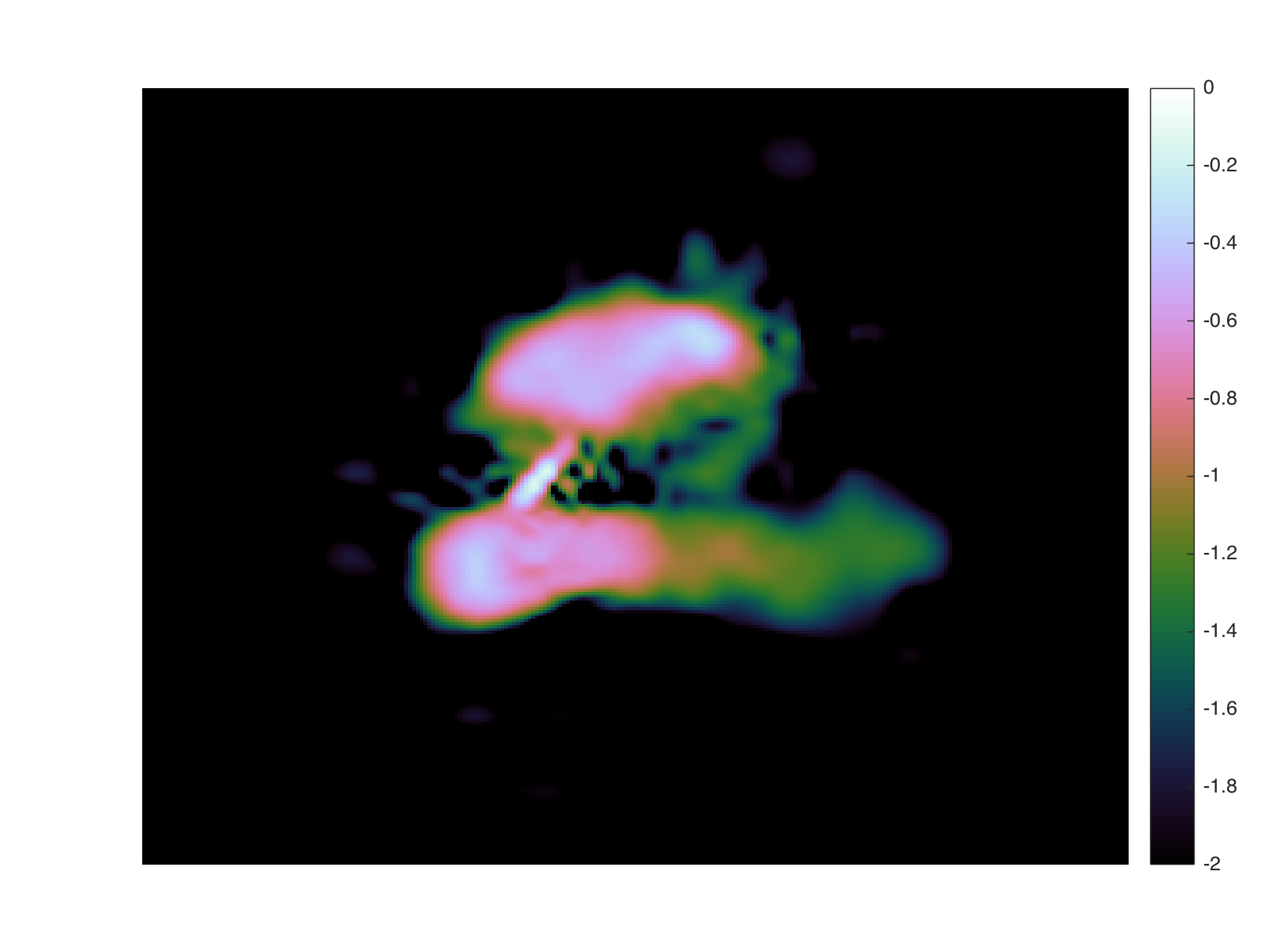} \put(-170,57){\rotatebox{90}{ 3C288}} &
		\includegraphics[trim={{.15\linewidth} {.07\linewidth} {.02\linewidth} {.072\linewidth}}, clip, width=0.32\linewidth, height = 0.28\linewidth]
		{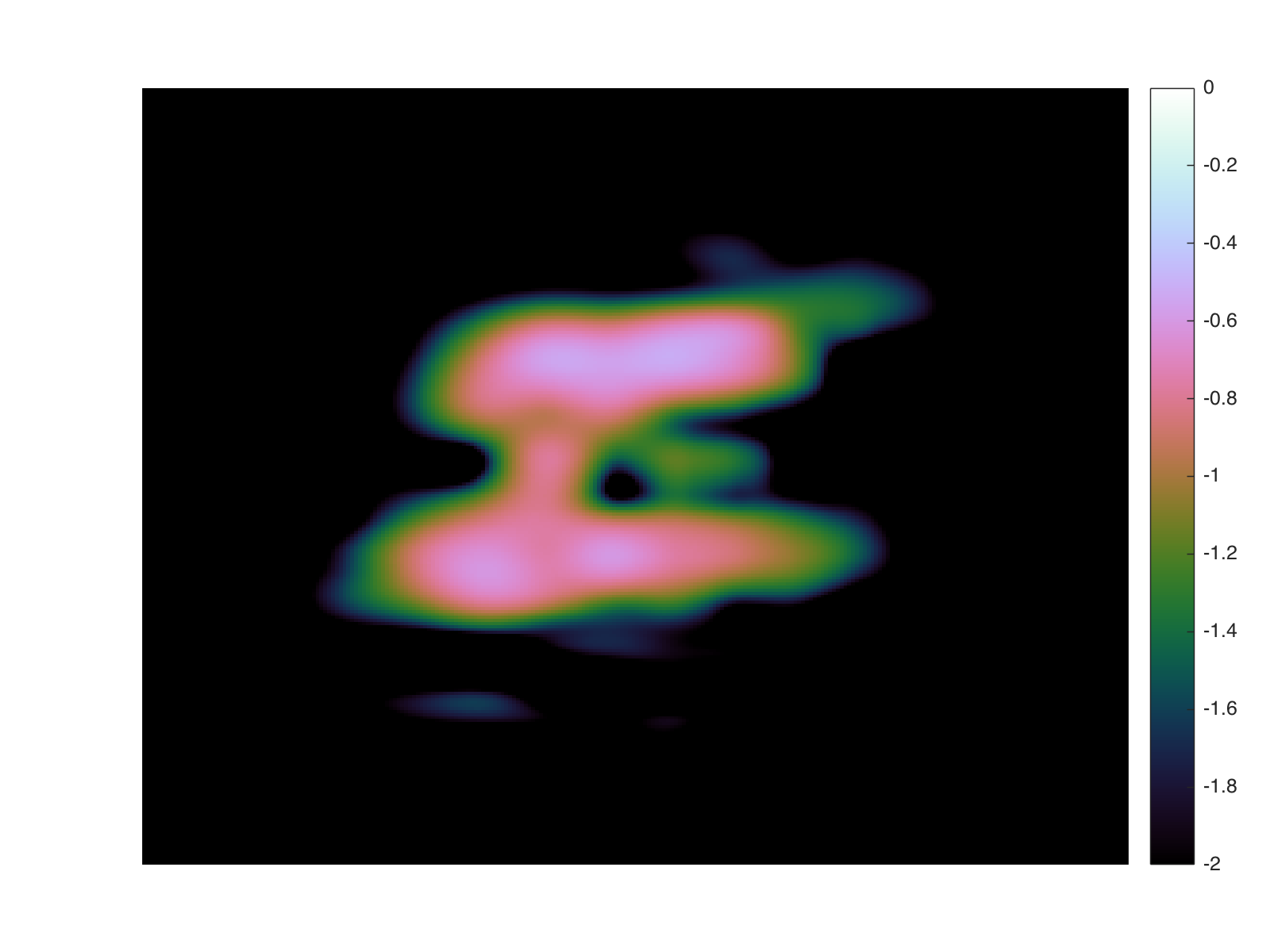} &
		\includegraphics[trim={{.15\linewidth} {.07\linewidth} {.02\linewidth} {.072\linewidth}}, clip, width=0.32\linewidth, height = 0.28\linewidth]
		{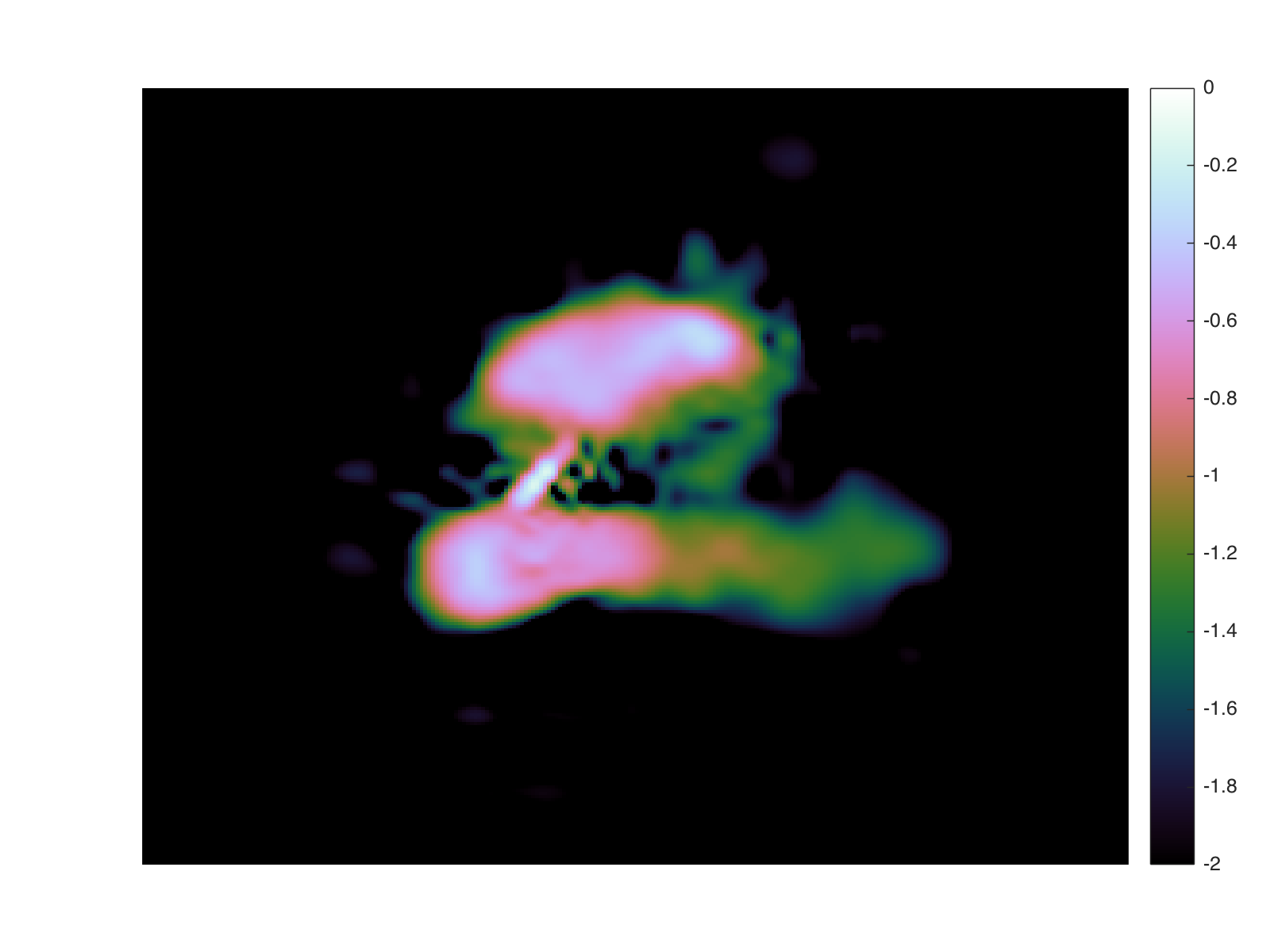} 
		\\
		{ (a) Offline algorithm} & {(b) Offline algorithm } & { (c) Online algorithm (our work)}                    \\
		(storage: 100\% visibilities)         & (storage: 2\% visibilities)        & { (storage: 2\% visibilities)} 
	\end{tabular}
	\caption{Image reconstruction results of the standard (offline) algorithm and the online algorithm (our work)
		for images M31 (first row), Cygnus A (second row), W28 (third row), and 3C288 (fourth row). The number of iterations for the
		tested methods is set to 50, and all images are shown in ${\tt log}_{10}$ scale. 
		Panels (a) and (b): results of the standard algorithm correspond to 100\%  and 2\% of all acquired visibilities, respectively, 
		where 10\% of discrete visibilities are acquired in our simple discrete simulations.
		Panel (c): result of the online algorithm, with visibilities gradually increased from 2\% to 100\% block-by-block
		(each visibility block contains 2\% of all acquired visibilities),
		according to the distance of the visibilities to the origin of the Fourier domain.
    Clearly, when using all visibilities similar reconstruction quality is obtained by both the standard offline and online methods (panels (a) and (c), respectively), 
    which achieve the same SNR (14.2946 dB) for M31, for example. However, the online method requires storage for only 2\% of the acquired  visibilities, 
    whereas the offline method must store them all.  Panel (b) shows reconstructions when using the the same amount of storage for the standard offline method 
    as required by the online method.  In this setting there are too few visibilities to produce a reasonable reconstruction.
		We emphasise that the online algorithm combines the reconstruction task with the 
		visibility acquisition stage, which can significantly improve the reconstruction speed and dramatically reduce storage costs.
	}
	\label{fig-m31}
\end{figure*}
\addtolength{\tabcolsep}{\tabL}

\subsection{Algorithm performance}
We present the results of the tested methods  -- the online method (our work) and the standard method -- 
and their comparison in terms of reconstruction quality, visibility storage requirements and computational cost. 
Moreover, the quantitative performance of the methods is also analysed with respect to different 
settings regarding the number of visibility blocks.

\subsubsection{Reconstruction}
Figure \ref{fig-m31} shows the reconstruction results of the tested methods for the analysis model \eqref{eqn:ir-un-af} on the four
test images. For the online method, all the acquired visibilities (10\% of Fourier coefficients) are partitioned into $B = 50$ blocks, where
every block has the same number of visibilities (2\% of the acquired visibilities or 0.2\% of the total number of discrete Fourier coefficients).   
Figure \ref{fig-m31} (a) shows the results of the standard method using the entire observed visibilities.
Similarly, Figure \ref{fig-m31} (b) shows the results of the standard method but with just 
0.2\% of Fourier coefficients uniformly randomly selected from the variable density samples (rather than by distance to the origin), 
which equals the number of Fourier coefficients contained in a single visibility block used for the online method.
This is to test both methods under the situation of limited storage -- a storage which is not large enough to store the entire observed visibilities.
Recall that the online method in principle works for arbitrarily small storage.
Clearly, using all visibilities a good reconstruction is obtained by the standard method, which is not the case when using just 0.2\% of
Fourier coefficients, which is too few to produce a reasonable reconstruction. 
Figure \ref{fig-m31} (c) shows the results of our online method, which are as good as those of the standard method ({\it cf.} Figure \ref{fig-m31} (a))  
under visual validation and quantitative comparison in SNR.
For instance, both methods achieve the same SNR, 14.2946 dB, for test image M31. More detailed quantitative comparison 
of SNR will be deferred to the next subsection. On the whole, the results between each test image are consistent with each other, 
demonstrating the excellent performance of the online method, which
provides approximately as good reconstruction quality as the standard method.

We test that the online method is independent of the visibility splitting strategies, {\it i.e.}, that different strategies produce consistent results. 
As mentioned before, the alternative strategy to split visibilities is to sample the full set of variable density sampled visibilities in a uniformly random manner. 
As an example, the result of M31 shown in Figure \ref{fig-m31} (c, top) is achieved with SNR 14.2946 dB, and an almost identical result with 
SNR 14.2943 dB is obtained using the alternative visibility splitting strategy. For all the tests, we do not show the reconstructed images
when using the different splitting strategy to avoid repetition, since, from visual validation the results are very similar to those shown 
in Figure \ref{fig-m31} (c) based on distance from the origin. 
Seeing this fact, in the following experiments, the splitting strategy based on distance from the origin
is adopted.

\addtolength{\tabcolsep}{-\tabL}
\begin{figure*}
	\centering
	\begin{tabular}{cc}
		\includegraphics[trim={{.07\linewidth} {.03\linewidth} {.08\linewidth} {.05\linewidth}}, clip, width=0.48\linewidth, height = 0.38\linewidth]
		{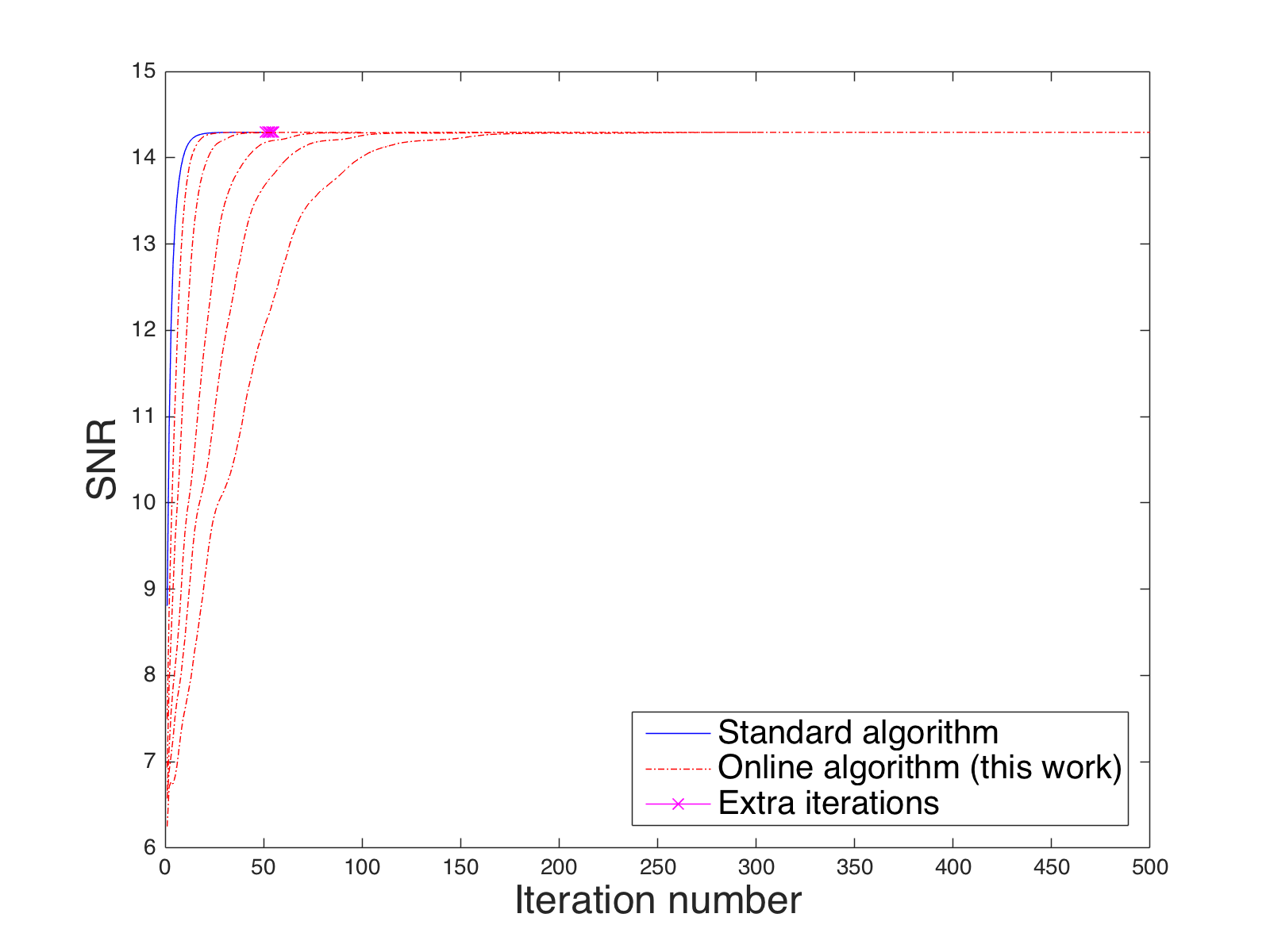}
		\put(-147,145){Visibility storage}
		\put(-147,138){requirements:} \put(-147,126){ 100\%} \put(-148,128){ \vector(-2,1){73}} 
		\put(-147,116){ 2\%}  \put(-148,118){ \vector(-2,1){72}} 
		\put(-147,106){ 1\%}  \put(-148,108){ \vector(-2,1){71}} 
		\put(-147,96){ 0.5\%} \put(-148,98){ \vector(-2,1){67}} 
		\put(-147,86){ 0.3\%} \put(-148,88){ \vector(-2,1){65}} 
		\put(-147,76){ 0.2\%} \put(-148,78){ \vector(-2,1){60}} 
		\put(-216,166){\framebox(25,15){ }}
		            &                  
		\includegraphics[trim={{.07\linewidth} {.03\linewidth} {.08\linewidth} {.05\linewidth}}, clip, width=0.48\linewidth, height = 0.38\linewidth]
		{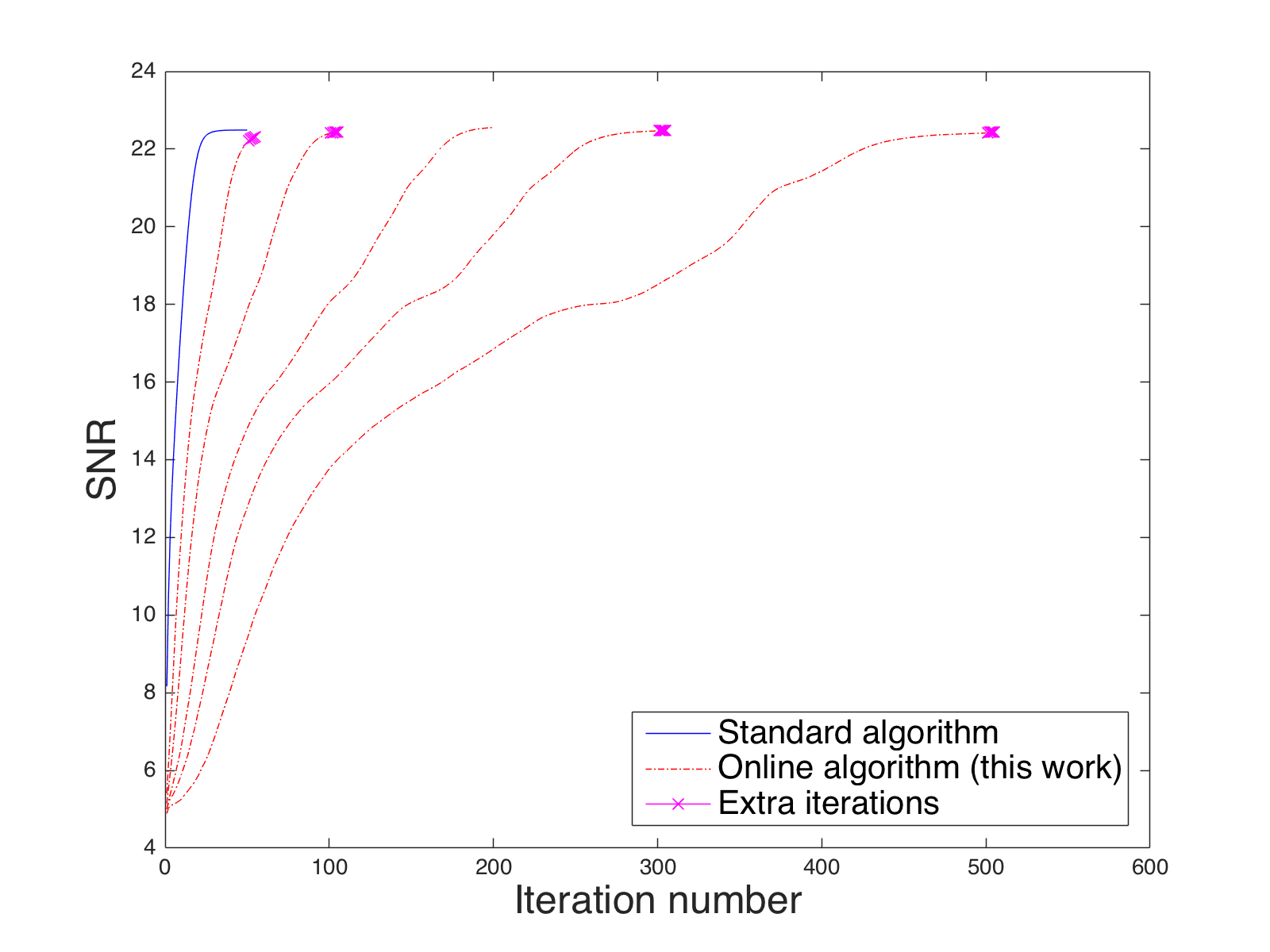} 
		\put(-140,124){Visibility storage}
		\put(-140,117){requirements:} 
		\put(-140,105){ 100\%} \put(-141,107){ \vector(-2,1){78}} 
		\put(-140,95){ 2\%}  \put(-141,97){ \vector(-2,1){73}} 
		\put(-140,85){ 1\%}  \put(-141,87){ \vector(-2,1){69}} 
		\put(-140,75){ 0.5\%} \put(-141,77){ \vector(-2,1){64}} 
		\put(-140,65){ 0.3\%} \put(-141,67){ \vector(-2,1){62}} 
		\put(-140,55){ 0.2\%} \put(-141,57){ \vector(-2,1){56}} 
		\put(-212,160){\framebox(31,25){ }}
		\\
		{ (a) M31 } & { (b) Cygnus A } \\
		\includegraphics[trim={{.07\linewidth} {.03\linewidth} {.08\linewidth} {.05\linewidth}}, clip, width=0.48\linewidth, height = 0.38\linewidth]
		{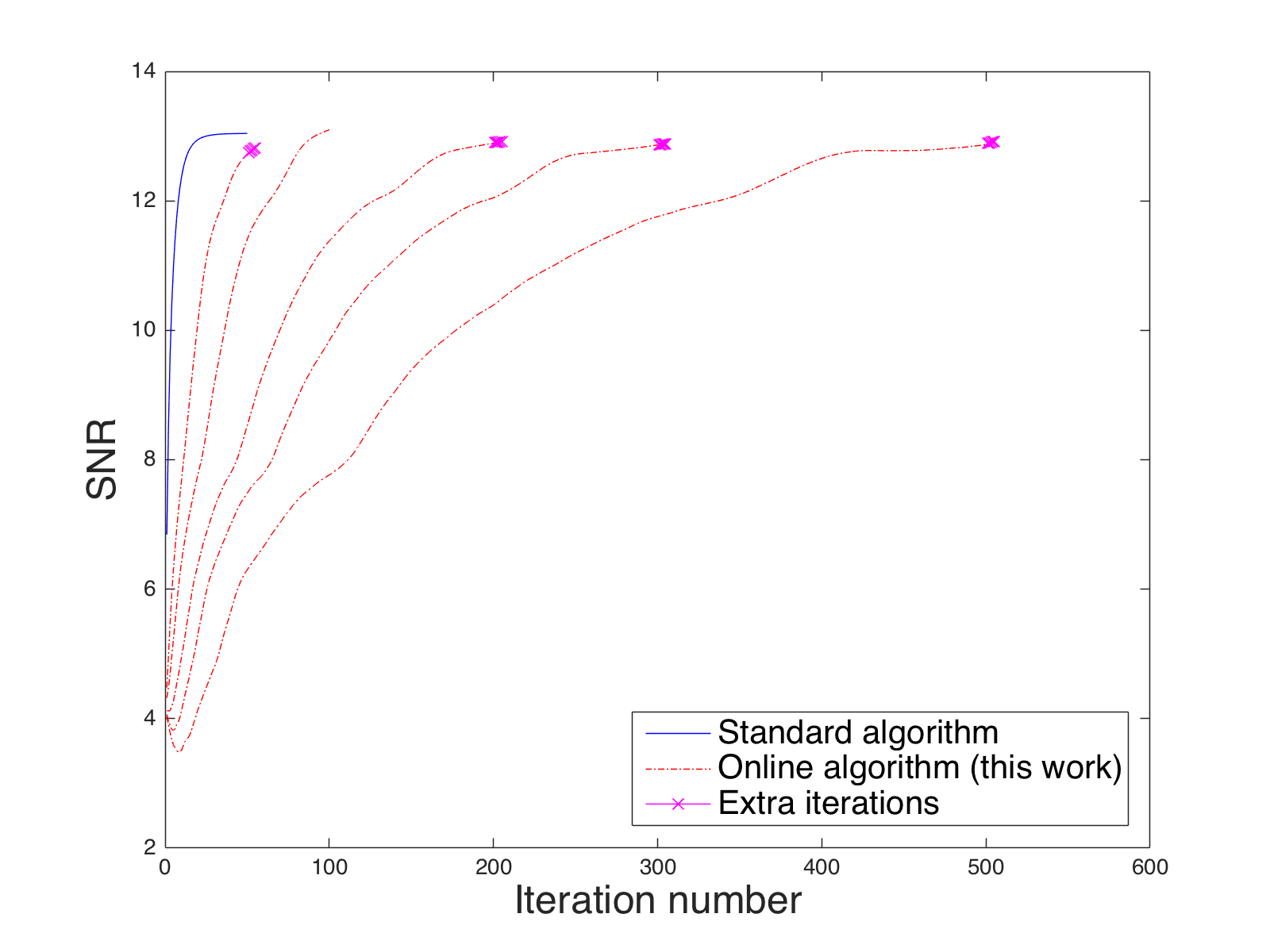}  
		\put(-140,129){Visibility storage}
		\put(-140,122){requirements:} \put(-140,110){ 100\%} \put(-141,112){ \vector(-2,1){81}} 
		\put(-140,100){ 2\%}  \put(-141,102){ \vector(-2,1){75}} 
		\put(-140,90){ 1\%}  \put(-141,92){ \vector(-2,1){70}} 
		\put(-140,80){ 0.5\%} \put(-141,82){ \vector(-2,1){64}} 
		\put(-140,70){ 0.3\%} \put(-141,72){ \vector(-2,1){59}} 
		\put(-140,60){ 0.2\%} \put(-141,62){ \vector(-2,1){56}} 
		\put(-215,160){\framebox(31,25){ }}
		            &                  
		\includegraphics[trim={{.07\linewidth} {.03\linewidth} {.08\linewidth} {.05\linewidth}}, clip, width=0.48\linewidth, height = 0.38\linewidth]
		{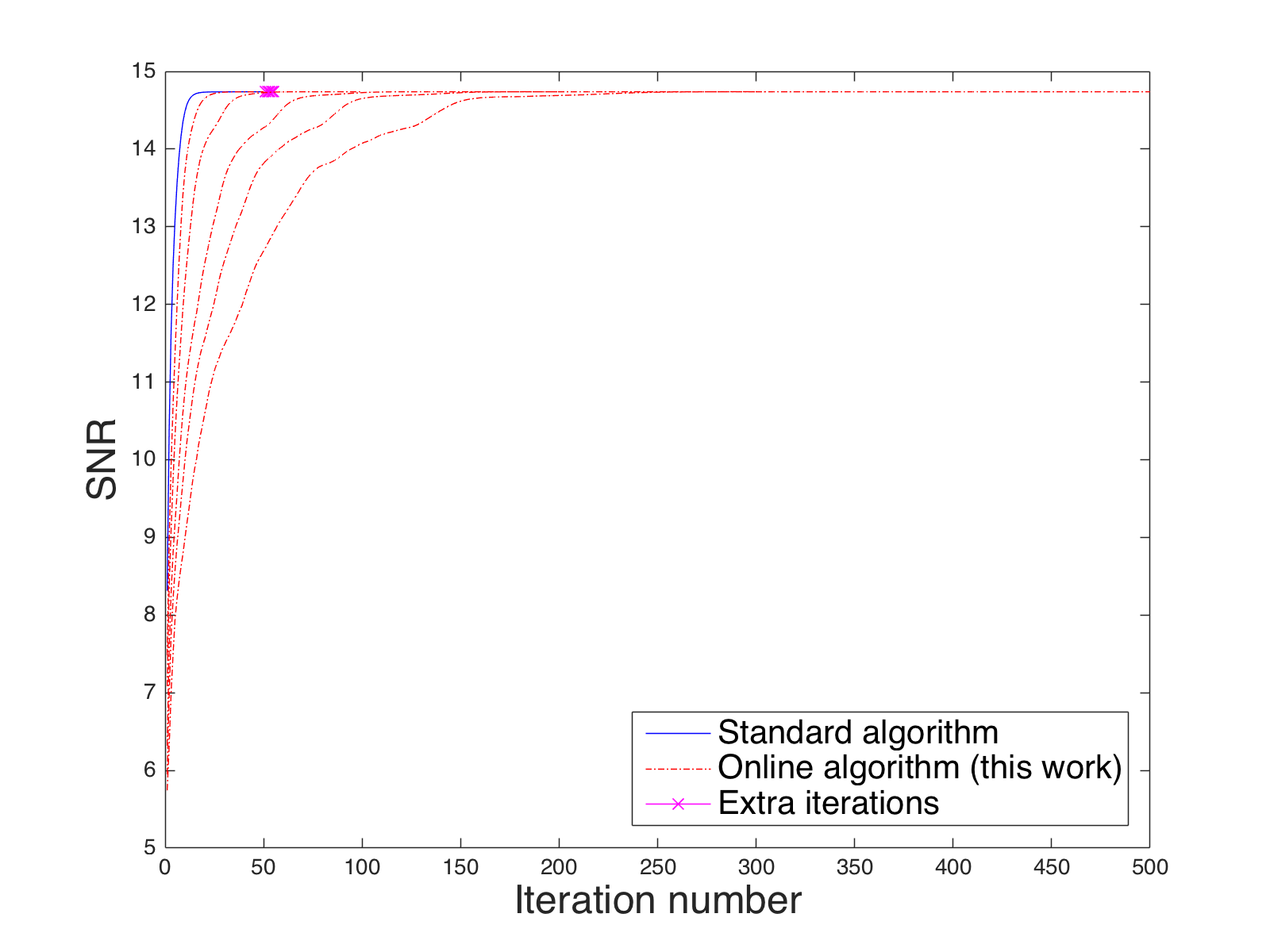} 
		\put(-140,144){Visibility storage} 
		\put(-140,137){requirements:} 
		\put(-140,125){ 100\%} \put(-141,127){ \vector(-2,1){80.5}} 
		\put(-140,115){ 2\%}  \put(-141,117){ \vector(-2,1){80}} 
		\put(-140,105){ 1\%}  \put(-141,107){ \vector(-2,1){78}} 
		\put(-140,95){ 0.5\%} \put(-141,97){ \vector(-2,1){76}} 
		\put(-140,85){ 0.3\%} \put(-141,87){ \vector(-2,1){75.5}} 
		\put(-140,75){ 0.2\%} \put(-141,77){ \vector(-2,1){74}} 
		\put(-210,174){\framebox(16,12){ }}
		\\
		{ (c) W28 } & { (d) 3C288 }    
	\end{tabular}
	\begin{tabular}{cccc}
		\includegraphics[trim={{.09\linewidth} {.02\linewidth} {.18\linewidth} {.05\linewidth}}, clip, width=0.24\linewidth, height = 0.21\linewidth]
		{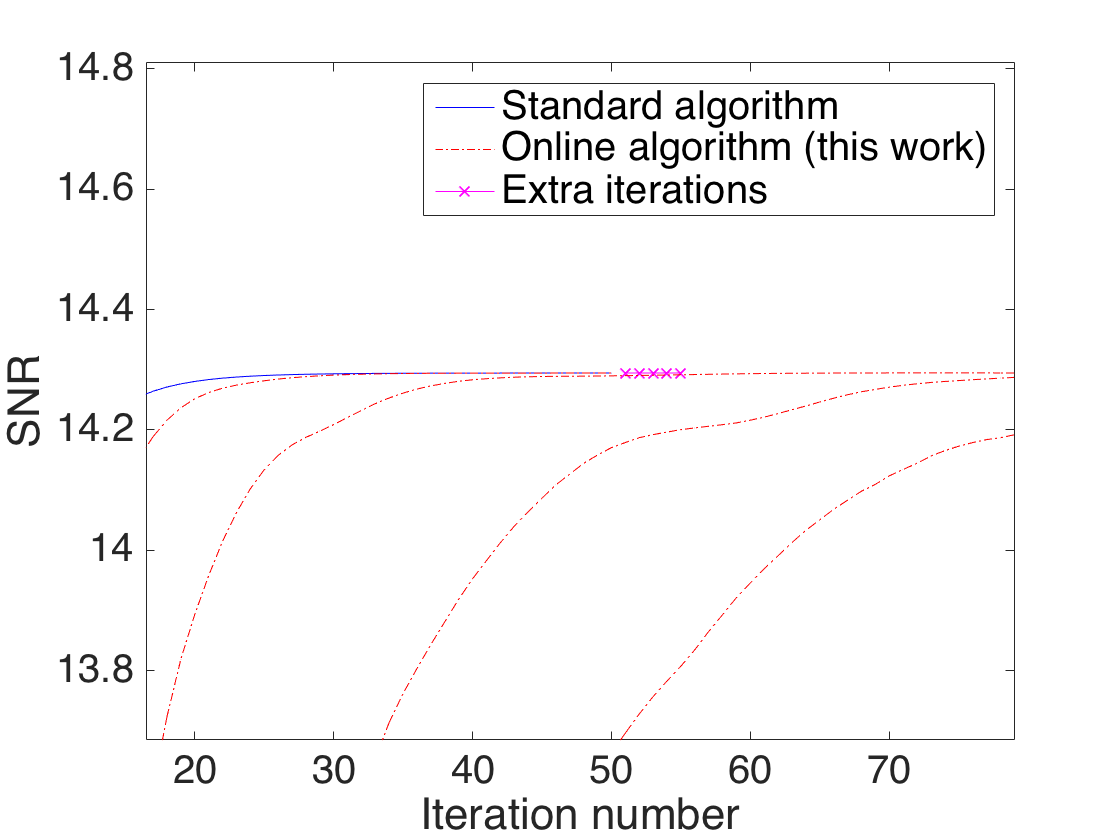} 		   
		\put(-127,48){\rotatebox{90}{SNR}} &
		\includegraphics[trim={{.09\linewidth} {.02\linewidth} {.18\linewidth} {.05\linewidth}}, clip, width=0.24\linewidth, height = 0.21\linewidth]
		{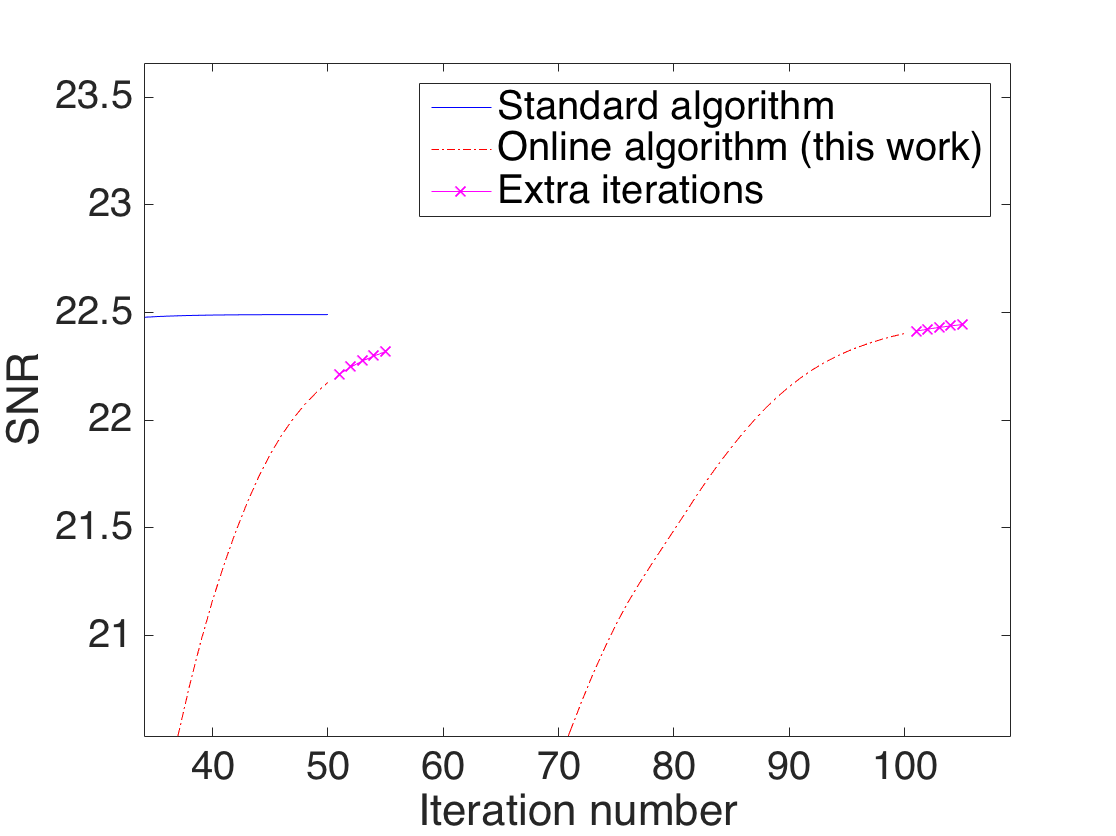}  &
		\includegraphics[trim={{.09\linewidth} {.02\linewidth} {.18\linewidth} {.05\linewidth}}, clip, width=0.24\linewidth, height = 0.21\linewidth]
		{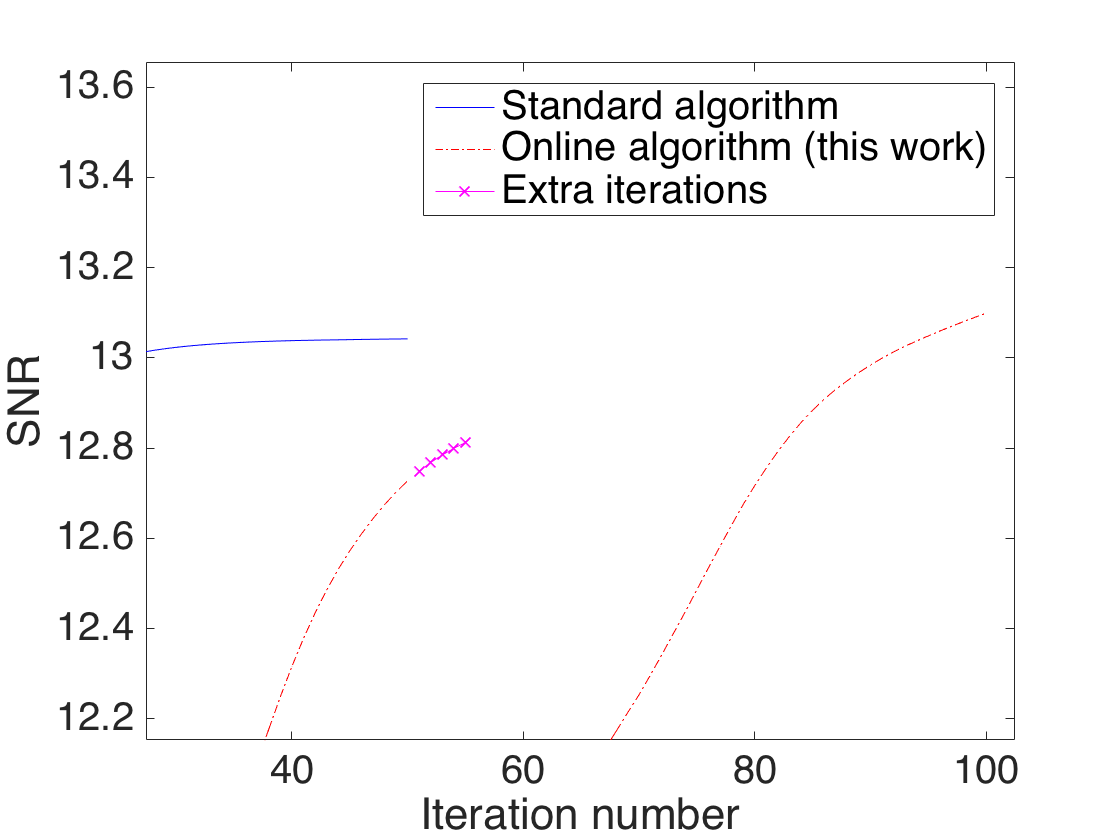}   &
		\includegraphics[trim={{.09\linewidth} {.02\linewidth} {.18\linewidth} {.05\linewidth}}, clip, width=0.24\linewidth, height = 0.21\linewidth]
		{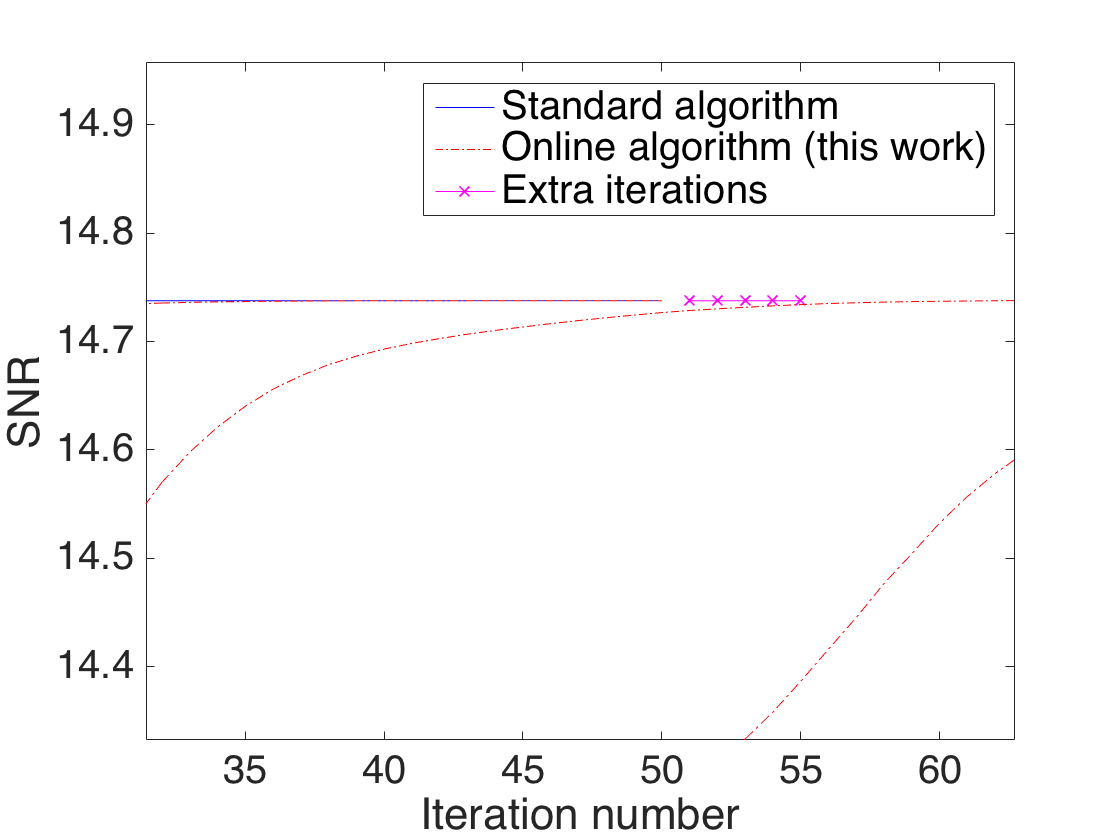}  \\
		(e) Zoom in (a) & (f) Zoom in (b) & (g) Zoom in (c) & (h) Zoom in (d) 
	\end{tabular} 
	\caption{Image reconstruction results in SNR against iteration number. The blue line and the red dot-dashed line represent the results
		of the standard algorithm and the online algorithm (our work), respectively. The magenta line with cross marks represent the 
		5 extra iterations of the online algorithm. In particular, for the online algorithm, 50, 100, 200, 300 and 500 visibility blocks are tested. 
		When the SNR of the online algorithm is less than that of the standard algorithm after the final visibility block is assimilated,
		5 extra iterations are executed (see the magenta line with cross marks).
		Panels (a)--(d): results for images of M31, Cygnus A, W28 and 3C288, respectively.
		Panels (e)--(h): zoomed in areas of the rectangles in (a)--(d), respectively.
		These plots show that both the standard and online algorithms provide reconstructed images with very similar 
		SNR. Moreover, the results of the online algorithm with respect to differing numbers of visibility blocks
		reveal that the online algorithm converges stably and is robust with respect to arbitrary numbers of visibility blocks. 
		We emphasise again that, for the online algorithm, the larger the number of blocks, the lower the visibility storage requirements. 
		Even though a large number of blocks requires lots of iterations, the first iterations are very fast due to the small amount of data used. 
		Also, since almost all the computation is done before the data acquisition finishes, the online method always ends its reconstruction task 
		much faster than the standard method. In this sense, the computation time of the online method is independent of 
		the number of blocks.
		Finally, the results of the extra iterations for the online algorithm show that an improvement can indeed be achieved but is 
		not dramatic and therefore optional; the standard iterations of the online algorithm, basically, can ensure excellent reconstructions already.
	}
	\label{fig-ite-comp-dist}
\end{figure*}
\addtolength{\tabcolsep}{\tabL}

\subsubsection{SNR analysis}

\begin{table*}
	\begin{center}
		\caption{Relative difference in SNR between the results of the standard method and the online method (our work) with the number of visibility blocks
			set to 50, 100, 200, 300 and 500, for test images of M31, Cygnus A, W28 and 3C288, for the analysis model \eqref{eqn:ir-un-af}. 
			In particular, a negative SNR means the online method performs better than the standard method (and vice versa for a positive SNR).
			The number in brackets, {\it e.g.} 50 (5), means the result of the online method is computed with 50 visibility blocks and 5 extra iterations 
			assigned after processing the entire 50 blocks. Extra iterations can improve reconstruction quality but differences are small.
			From this table, we see that, quantitatively, both methods perform similarly: sometimes the standard method is slightly better
			and sometimes the online method is, but there is no substantial difference. 	} \label{tab:snr}
		\vspace{-0.05in}
		\begin{tabular}{crrrrrrrrrr}
			\hline  
			\multirow{2}{*}{Images}  &   \multicolumn{10}{c}{Number of visibility blocks (extra number of iterations of the online method)} \\ \cline{2-11}
			  & \multicolumn{1}{c}{50} & 50 (5) & 100 & 100 (5) & 200 & 200 (5) & 300 & 300 (5) & 500 & 500 (5) 
			\\ \hline \hline
			{M31} & $1.9 {\rm e}{-6}$  & $-1.1 {\rm e}{-7}$ &  $-1.3 {\rm e}{-6}$   & $-1.3 {\rm e}{-6}$ & $-1.4 {\rm e}{-6}$
			& $-1.4 {\rm e}{-6}$  & $-1.4 {\rm e}{-6}$ &  $-1.4 {\rm e}{-6}$   &  $-1.3 {\rm e}{-6}$ &  $-1.3 {\rm e}{-6} $
			\\ \hline
			{Cygnus A  } &  $1.4 {\rm e}{-2}$  & $7.6 {\rm e}{-3}$  & $3.9 {\rm e}{-3}$   & $2.1 {\rm e}{-3}$ & $-3.0 {\rm e}{-3}$  
			&  $-3.4 {\rm e}{-3}$  & $8.7 {\rm e}{-4}$  & $5.3 {\rm e}{-4}$   & $3.1 {\rm e}{-3}$ & $2.3 {\rm e}{-3}$  
			\\ \hline
			{W28 } &  $2.4 {\rm e}{-2}$    & $1.8 {\rm e}{-2}$    & $-4.4 {\rm e}{-3}$   & $-7.0 {\rm e}{-3}$ & $1.1 {\rm e}{-2}$ 
			&  $9.6 {\rm e}{-3}$    & $1.4 {\rm e}{-2}$    & $1.2 {\rm e}{-2}$   & $1.3 {\rm e}{-2}$ & $9.3 {\rm e}{-3}$  
			\\ \hline
			{3C288 } & $1.5 {\rm e}{-6}$  & $6.1 {\rm e}{-7}$ &  $-1.3 {\rm e}{-7}$   & $-4.0 {\rm e}{-8}$  & $-6.0 {\rm e}{-8}$  
			& $-3.0 {\rm e}{-8}$  & $-7.0 {\rm e}{-8}$ &  $-3.0 {\rm e}{-8}$   & $-2.0 {\rm e}{-8}$  & $-2.0 {\rm e}{-8}$  
			\\ \hline
		\end{tabular}
	\end{center}
\end{table*}

For the online method, as we discussed, splitting the entire visibilities into different numbers of visibility blocks 
impacts both storage requirements and computational cost.
Firstly, when the number of blocks is relatively small, {\it i.e.}, similar to the number of iterations of the standard (offline) method,
the computational cost is reduced compared to the standard method.
Secondly, the larger the number of blocks, the lower the visibility storage requirements. 
Even though a large number of blocks requires lots of iterations, the
first iterations are very fast due to the small amount of data used. Moreover, no matter how large the number of blocks is,
since almost all the computation is done before the data acquisition finishes, the online method always ends its reconstruction task 
much faster than the standard method. Recall Figure \ref{fig-ite-comp-memory} for a pictorial inspection.

In order to show the influence of the number of visibility blocks on the reconstruction quality, 
simulation results are reported in Figure \ref{fig-ite-comp-dist} with, as examples, the number of visibility blocks set to
50, 100, 200, 300 and 500. The iteration histories (SNR against iterations) of the standard method and the online method are 
respectively shown by blue (solid) and red (dot-dashed) lines. 
Moreover, for the online method, five more iterations, which are regarded as extra iterations, 
are executed when the obtained SNR is lower than that of the standard method and 
are shown by the magenta line with cross marks.
For all the test images, Figure \ref{fig-ite-comp-dist} demonstrates that the online method, for different numbers of visibility blocks, 
provides reconstructions with as good SNR as those achieved by the standard method, {\it i.e.}, slightly higher or lower
with almost equal chances but with no substantial difference.  

In particular, from Figure \ref{fig-ite-comp-dist}, we can also see that the larger the visibility block size, 
the quicker (in terms of number of iterations) the highest SNR is reached. Nevertheless, after processing the last visibility block, 
all of the settings with different number of blocks get reconstructions with very similar SNR. 
This suggests that the online method converges stably and is robust with respect 
to arbitrary numbers of visibility blocks. 

Accompanying Figure \ref{fig-ite-comp-dist}, Table \ref{tab:snr} gives the relative difference in SNR between the results of 
the standard method (${\rm SNR}_{\rm standard}$) and online method (${\rm SNR}_{\rm online}$) under 
different number of visibility blocks, {\it i.e.},
\begin{equation} \label{eqn:rel-diff}
	{\rm relative \ difference} = \frac{{\rm SNR}_{\rm standard}  - {\rm SNR}_{\rm online}}{{\rm SNR}_{\rm standard}}.
\end{equation}
Positive and negative signs of the relative difference correspond to 
better performance of the standard method and the online method, respectively.
In the previous subsection, we concluded that both methods achieve comparable reconstruction results via visual validation.
This agrees with Table \ref{tab:snr}, which shows that both methods quantitatively perform similarly:
sometimes the standard method is slightly better and sometimes the online method is, but there are no substantial differences in quality. 

Figure \ref{fig-ite-comp-dist} and Table \ref{tab:snr} tells us that the online method can already provide very good reconstructions 
after processing the last visibility block. After that, it is optional to execute a few more iterations to improve the SNR of the reconstruction,
as we can see from the magenta lines with cross marks in Figure \ref{fig-ite-comp-dist}. 
However, the improvement is not dramatic; the standard number of iterations, basically, can ensure excellent reconstructions already.

\section{Conclusions}\label{sec:con}
The work in this article has been motivated by critical computational problems in scaling RI imaging to the big-data
era of radio astronomy that will be ushered in by the SKA and precursor telescopes.  
In particular, we addressed the extremely high storage requirements and computational costs of standard (offline) methods of
recovering images from the raw data that will be acquired by forthcoming telescopes. 
We presented an online imaging methodology by extending standard sparse regularisation methods. 

Generally speaking, our online method starts the reconstruction task at the beginning of the data acquisition stage (not after) 
and keeps updating the quality of the reconstruction by continually assimilating newly acquired visibilities (visibility blocks), 
before discarding them to release storage. In other words, it combines the data acquisition stage with the data processing stage, 
and it orderly processes data blocks as received at consecutive time slots. In detail, the online method firstly achieves
intermediate reconstructions using the currently acquired data blocks, and then treats the intermediate reconstruction
as a starting point to further update the reconstruction with newly obtained data blocks, until the last data block is processed. 

The online method achieves good reconstruction fidelity much faster than standard methods, 
which do not begin until the visibility acquisition stage is completed. 
Roughly speaking, the online method has the ability of providing 
an excellent reconstruction as soon as the visibility acquisition procedure completes, which significantly improves 
the reconstruction speed.
Moreover, the computational cost of the online method is further reduced for a reasonable choice of number of blocks
since the amount of data to be considered for early iterations is small.
Furthermore, the online method has the advantage of significantly lower visibility storage requirements. 
In principle, the storage needed for the online method can be arbitrarily small; recall that standard methods
always require all the visibilities to be stored for subsequent processing.
Consequently, these two main virtues -- extremely low storage requirements and fast computation speed -- 
make the online method highly relevant for addressing the big-data processing obstacles of RI imaging in the near future. 

There are a number of avenues of future work. Since the proposed online framework is very general, 
it will be interesting to investigate equipping other methods with this online strategy.
Considering overcomplete bases in the objective functionals is likely to provide improvements in reconstruction fidelity and
is another interesting avenue of future investigation. 
The online method will be implemented in the existing PURIFY\footnote{\url{https://github.com/basp-group/purify}} 
package for RI imaging, where it may then be applied easily to real observations and combined with existing performance 
gains from distributed and shared parallelisation. 
Finally, we will integrate our online method with the uncertainty quantification framework presented in \cite{CPM17B} to 
perform efficient imaging and uncertainty quantification for the emerging big-data era of radio astronomy.

\section*{Acknowledgements}
This work is supported by the UK Engineering and Physical Sciences Research Council (EPSRC) by grant EP/M011089/1 
and the Science and Technology Facilities Council (STFC) by grant ST/N000811/1.

\bibliographystyle{mnras}
\bibliography{refs_xhcai}

\label{lastpage}

\end{document}